\documentclass[12pt]{amsart}
\usepackage[margin=1in]{geometry}
\usepackage{amsfonts}
\usepackage{amssymb}
\usepackage[dvips]{graphics}
\usepackage{epsfig}
\pagestyle{myheadings}
\usepackage{euscript}
\usepackage{color}
\usepackage{tikz}
\usepackage [all,arc,curve,color, arrow, matrix,frame]{xy}
\newcommand*\circled[1]{\tikz[baseline=(char.base)]{
            \node[shape=circle,draw,inner sep=1pt] (char) {#1};}}

 \newtheorem{thm}{Theorem}[section]
 \newtheorem{cor}[thm]{Corollary}
 
 \newtheorem{prop}[thm]{Proposition}
 \theoremstyle{definition}
 
 \theoremstyle{remark}

 \numberwithin{equation}{section}
\numberwithin{figure}{section}

\newcommand{\be}{\begin{equation}}
\newcommand{\ee}{\end{equation}}
\newcommand{\bea}{\begin{eqnarray}}
\newcommand{\eea}{\end{eqnarray}}

\newcommand{\C}{\mathbb{C}}
\newcommand{\R}{\mathbb{R}}

\newcommand{\Z}{\mathbb{Z}}

\newcommand{\N}{\mathbb{N}}

\DeclareMathOperator{\supp}{supp}

\DeclareMathOperator{\dist}{dist}

\def\idty{{\mathchoice {\mathrm{1\mskip-4mu l}} {\mathrm{1\mskip-4mu l}} %
{\mathrm{1\mskip-4.5mu l}} {\mathrm{1\mskip-5mu l}}}}

\numberwithin{equation}{section}

\topmargin-.5in \textwidth6.6in \textheight9in \oddsidemargin0in

\begin{document}

\title[The XXZ Quantum Spin Chain]{The infinite XXZ quantum spin chain revisited:\\ Structure of low lying spectral bands and gaps}

\author[C. Fischbacher]{Christoph Fischbacher$^1$}
\address{$^1$ School of Mathematics, Statistics and Actuarial Science, University of Kent, Canterbury, Kent CT2 7NF, UK}
\email{cf299@kent.ac.uk}

\author[G. Stolz]{G\"unter Stolz$^2$}
\address{$^2$ Department of Mathematics\\
University of Alabama at Birmingham\\
Birmingham, AL 35294, USA}
\email{stolz@uab.edu}

\date{}

%
\begin{abstract}

We study the structure of the spectrum of the infinite XXZ quantum spin chain, an anisotropic version of the Heisenberg model. The XXZ chain Hamiltonian preserves the number of down spins (or particle number), allowing to represent it  as a direct sum of $N$-particle interacting discrete Schr\"odinger-type operators restricted to the fermionic subspace. In the Ising phase of the model we use this representation to give a detailed determination of the band and gap structure of the spectrum at low energy. In particular, we show that at sufficiently strong anisotropy the so-called droplet bands are separated from higher spectral bands uniformly in the particle number. Our presentation of all necessary background is self-contained and can serve as an introduction to the mathematical theory of the Heisenberg and XXZ quantum spin chains.

\end{abstract}

\maketitle

\section{Introduction} \label{sec:intro}
Quantum spin systems have been widely used in theoretical physics and more recently in information theory as models in the study of many-body quantum systems. Single non-interacting quantum spins are very easy to describe, as they act in low finite-dimensional Hilbert spaces ($\C^2$ in case of a $1/2$-spin). Quantum spin systems thus become models which allow to almost exclusively focus on the complicated effects of interactions. In this spirit, our personal interest in quantum spin systems arose from trying to identify quantum many-body systems in which disorder effects (such as localization) can be rigorously studied. From this point of view the work presented here can be considered as a preliminary investigation of the quantum XXZ chain subject to an exterior magnetic field, which we plan to continue in future work by studying the case of random magnetic field.

The XXZ chain is a one-dimensional anisotropic version of the spin-$1/2$ Heisenberg (or XXX) model. One of the XXZ chain's key properties, making it accessible to rigorous investigation, is that it preserves the {\it particle number}, i.e.\ the number of, say, down spins in a spin configuration. 

One of our goals here is to reiterate that this property can be understood in terms of the fact that the Hamiltonian describing the XXZ spin chain is unitarily equivalent to a direct sum of $N$-body discrete Schr\"odinger-type operators, each restricted to the fermionic subspace of anti-symmetric states and involving attractive nearest neighbor interactions. While many aspects of this have been used at least implicitly in the physics and mathematics literature on the XXX and XXZ chain (in particular, the resulting representation (\ref{eq:equivform}) below is well known), this connection seems to be widely unknown among mathematicians working in the spectral theory of Schr\"odinger operators. Thus we hope that our work can trigger further applications of the wealth of knowledge in Schr\"odinger operator theory to quantum spin systems. It is mostly for this reason that we have written this paper from the pedagogical perspective of not only presenting some new results, but also surveying known facts from a somewhat different point of view and with self-contained proofs, trying to make this knowledge accessible to a new and wider audience.

Much of the mathematical interest in the Heisenberg and XXZ chains has arisen from the fact that their infinite volume versions provide examples of exactly solvable systems. This was accomplished by rigorous implementations of the Bethe ansatz, originating from \cite{Bethe}, leading to an exact diagonalization in terms of a Plancherel formula or complete generalized eigenfunction expansion. For the Heisenberg chain this is done in the series of papers \cite{Thomas1977, BabbittThomas1977, BabbittThomas1978, BabbittThomas1979}, which also discusses consequences for the scattering theory of the Heisenberg chain and that it provides an example of a completely integrable system. Later, the extension of the completeness result to the anisotropic XXZ chain was discussed in \cite{BabbittGutkin1990, Gutkin1993, Gutkin2000}.

As far as we can see, some of the technical details of these results have never been published. But note that \cite{Borodinetal} announces a new proof of completeness of the Bethe ansatz for the XXZ chain as work in preparation. Among recent works on the Bethe ansatz for the Heisenberg chain we also mention \cite{HNS} which provides many additional references.

Our main interest here, motivated by future applications to the XXZ chain in exterior magnetic field, lies in the structure of the low energy bands and gaps of the spectrum of the infinite volume XXZ Hamiltonian and its restrictions to the $N$-particle subspaces. Mathematical results on the ground state properties and low energy spectrum of the XXZ chain can be found in \cite{KN1, KN2, NS, NSS, Starr}, where further references, including to the physics literature, are provided. These works, as well as ours, are generally concerned with the {\it Ising phase} of the XXZ chain, where the $Z$-term (Ising term) dominates the X and Y-terms in the Hamiltonian. 

Our approach in this paper could be characterized as ``semi-hard''. Following the approach in \cite{NSS}, we use a rigorous version of the Bethe ansatz to study the low lying spectrum of the infinite XXZ chain, in particular the so-called {\it droplet bands}. In a sufficiently strong Ising regime this indeed gives the full spectral resolution of the XXZ Hamiltonian at the bottom of its spectrum. Then we use ``softer'' methods, in particular an HVZ-type theorem, to identify higher energy bands without attempting to prove completeness or explicitly describing all generalized eigenfunctions. 

We start in Section~\ref{sec:finitechain} by introducing the {\it finite} XXZ chain and recalling some basic facts, specifically particle number preservation and the structure of ground states.  What we present here can be considered common wisdom, which we learned in private communication from B.\ Nachtergaele. Logically Section~\ref{sec:finitechain} is independent from the rest of our paper, which is concerned with the {\it infinite} XXZ chain, but Section~\ref{sec:finitechain} serves to motivate what follows.

In Section~\ref{sec:infinitechain} we essentially follow \cite{NSS} in the definition of the infinite XXZ chain, also including an exterior magnetic field of variable strength in the $Z$ direction. This can be understood as an explicit construction of the GNS Hilbert space for the infinite volume XXZ chain relative to the all-spins-up vacuum vector and is enabled by particle number preservation (i.e.\ we may first introduce the restriction of the infinite XXZ chain to states with fixed particle number and then take a direct sum).

Then we show in Proposition~\ref{prop:Schrodinger} of Section~\ref{sec:Schrodinger} that the restriction of the infinite XXZ chain to the $N$-particle sector is unitarily equivalent to an $N$-body discrete Schr\"odinger operator restricted to the fermionic subspace. Due to working with the one-dimensional XXZ chain, this operator can also be directly expressed as a Schr\"odinger-type operator over the subgraph of ordered lattice points $x_1 < x_2 < \ldots < x_N$ of $\Z^N$, as pointed out in (\ref{eq:equivform}) below.

Much of the rest of our work, Sections~\ref{sec:droplet} to \ref{sec:gap}, is devoted to a thorough study of the spectrum of the free infinite XXZ chain (i.e.\ in the absence of an exterior field). In this, based on the insights from Section~\ref{sec:Schrodinger}, we are strongly guided by ideas from $N$-body scattering theory.

The construction and exact determination of the {\it droplet spectrum} for each $N$-particle sector of the XXZ chain in Section~\ref{sec:droplet}, while quite closely following \cite{NSS} in the details and using a rigorous version of the Bethe ansatz, thus is seen as determining the spectrum corresponding to one particular scattering channel, namely the one in which all $N$ particles form a single cluster. Then in Section~\ref{sec:HVZ}, via a classical approach from scattering theory, we use an HVZ-type theorem to determine other contributions to the spectrum corresponding to more complicated cluster decompositions of the $N$ particles. We do this for a general class of discrete $N$-body Schr\"odinger operators, only assuming that the interaction potential decays to zero and the kinetic energy operator is translation invariant. Proposition~\ref{prop:HVZ} provides an HVZ-type theorem for the full $N$-particle Hilbert space, which is then shown in Proposition~\ref{prop:HVZsub} to also hold for the restrictions of the $N$-body operators to the symmetric and anti-symmetric subspaces.
 
When applying the results of Section~\ref{sec:HVZ} to our particular model, in light of the attractiveness of the interaction, it is to be expected that the droplet band is the lowest band of the spectrum. As far as the ground state is concerned this is seen already in Section~\ref{sec:droplet}, e.g.\ Corollary~\ref{cor:specmin} (and also contained in \cite{NSS}), which says that the bottom of the droplet spectrum gives the spectral minimum of the XXZ chain in each $N$-particle sector. 

What we consider much more important for potential later applications is Proposition~\ref{prop:existencegap} in Section~\ref{sec:gap}, which can be thought of as the main result of our paper: If the anisotropy parameter $\Delta$ characterizing the Ising phase is sufficiently large ($\Delta>3$ will suffice), then the entire droplet band is strictly separated by a gap from the rest of the spectrum in the $N$-particle sector and this gap can be chosen {\it uniform} in the particle number $N$. In fact, $\Delta>3$ will be needed to cover the somewhat trivial case $N=1$. For all other particle numbers, $\Delta>2$ suffices to create a uniform gap and we will see that these values are sharp.

This also means that for $\Delta>3$ the infinite XXZ chain has a non-trivial spectral gap above the union of all $N$-particle droplet bands. This may be considered an analogue to a result for the {\it finite} XXZ chain proven in \cite{NS} (where, however, the size of $\Delta$ required for this is not explicitly quantified).

In Section~\ref{sec:remarks} we conclude with some remarks on the consequences of exposing the XXZ chain to an exterior magnetic field and, in particular, to a {\it random} field. We mention some initial results on this which were proven in the thesis \cite{F}. The latter also contains most of the other results presented here, often with more detailed proofs. We will also comment on perspectives provided by our results for proving localization properties of the XXZ chain in random field near its spectral minimum.

\section*{Acknowledgements}

G.\ S.\ was supported in part by NSF Grant DMS-1069320. C.\ F.\ was supported by the Max-Weber-Programm Bayern and received travel support from NSF Grant DMS-1069320 as well as from the Master Program in Theoretical and Mathematical Physics as part of the Elitenetwork of Bavaria.

It's a pleasure to acknowledge many fruitful discussions with Bruno Nachtergaele, Robert Sims and Shannon Starr on quantum spin systems, in general, as well as on the XXZ chain, in particular. We also acknowledge hospitality at the Erwin Schr\"odinger International Research Institute for Mathematical Physics in Vienna, where part of this work was done. C.\ F.\ also is grateful for his education at the LMU M\"unchen, in particular for support and encouragement provided by Peter M\"uller who co-advised the thesis \cite{F}, as well as for hospitality provided to him at the UAB Mathematics Department during two extended visits.

\section{The finite XXZ spin chain in transversal field} \label{sec:finitechain}

For much of our work we will consider the {\it infinite} XXZ chain, but it is instructive to start by recalling the {\it finite} XXZ chain and review some of its properties.

The canonical basis vectors in $\C^2$ will be denoted by
\begin{equation}
e_0 = \begin{pmatrix} 1\\0 \end{pmatrix} = |\uparrow\rangle \quad \mbox{and} \quad e_1 = \begin{pmatrix} 0\\1 \end{pmatrix} = |\downarrow\rangle,
\end{equation}
interpreted as ``up-spins'' and ``down-spins''. For $L\in \N$ consider the Hilbert space ${\mathcal H}_{[1,L]} = \bigotimes_{x=1}^L \C^2$ with product basis given by 
\begin{equation} \label{eq:productbasis}
\{e_{i_1} \otimes \ldots \otimes e_{i_L}: \;i_k \in \{0,1\},\, k=1,\ldots,L\}.
\end{equation}
 For these basis vectors we will also use symbols such as $|\uparrow \downarrow\rangle$, $|\uparrow \uparrow \ldots \uparrow \rangle_{[1,L]}$, etc.

Let
\begin{equation}
S^1 = \begin{pmatrix} 0 & 1/2 \\ 1/2 & 0 \end{pmatrix}, \quad S^2 = \begin{pmatrix} 0 & -i/2 \\ i/2 & 0 \end{pmatrix}, \quad S^3 = \begin{pmatrix} 1/2 & 0 \\ 0 & -1/2 \end{pmatrix}
\end{equation}
be the standard spin-$1/2$-matrices and
\begin{equation} 
S^+ = S^1 +iS^2 = \begin{pmatrix} 0 & 1 \\ 0 & 0 \end{pmatrix}, \quad S^- = S^1 -iS^2 = \begin{pmatrix} 0 & 0 \\ 1 & 0 \end{pmatrix}
\end{equation}
be the spin raising and lowering operators. 

The $XXZ$ chain is the self-adjoint Hamiltonian
\begin{equation}
H_{[1,L]} = \sum_{x=1}^{L-1} h_{x,x+1}
\end{equation}
on ${\mathcal H}_{[1,L]}$, where
\begin{eqnarray} \label{eq:frfree}
h_{x,x+1} & = & \gamma(\frac{1}{4}\idty - S_x^3 S_{x+1}^3) - \frac{1}{\Delta}(S_x^1 S_{x+1}^1 + S_x^2 S_{x+1}^2) \\
& = &  -\frac{1}{\Delta} (\vec{S}_x \cdot \vec{S}_{x+1}-\frac{1}{4}\idty) - (\gamma-\frac{1}{\Delta}) (S_x^3 S_{x+1}^3 - \frac{1}{4}\idty) \nonumber \\
& = & -\frac{1}{2\Delta}(t_{x,x+1}-\idty) - \frac{1}{4}(\gamma - \frac{1}{\Delta}) (\pi_{x,x+1}-\idty). \nonumber
\end{eqnarray}
Here, as usual, the notation $A_x$ indicates that a matrix $A$ acts on the $x$-component of the tensor product. The exchange operator $t_{x,x+1}$ acts on a product state by exchanging its components at the sites $x$ and $x+1$, while $\pi_{x,x+1}$ acts on the basis vectors (\ref{eq:productbasis}) as the identity if the sites $x$ and $x+1$ have the same spin and as the negative identity if these sites have opposite spins.

The parameters $\gamma$ and $\Delta$ are assumed non-negative, which is often referred to as the ferromagnetic case.
Special cases  include the Heisenberg chain ($\gamma = 1/\Delta$), the Ising chain ($\Delta = \infty$) and the isotropic XY (or XX) model ($\gamma=0$). 

Here we generally consider the Ising phase of the XXZ chain, i.e.\ we assume $\gamma > 1/\Delta >0$. In fact, in this case nothing will be gained by keeping $\gamma$ flexible, thus from now on we set 
\begin{equation} \label{eq:parameters} 
\gamma = 1 \quad \mbox{and} \quad 1<\Delta<\infty.
\end{equation}

Working with the renormalized form of the XXZ chain, given by the extra $\frac{1}{4}\idty$ terms in the first line of (\ref{eq:frfree}), has the consequence that in the Ising phase the two terms in the last line of (\ref{eq:frfree}) are non-negative (this renormalization is also needed to extend the definition of the XXZ chain to infinite volume in the next section). From this one can see that the ground state energy of $H_{[1,L]}$ is $0$ and that the corresponding eigenspace is two-dimensional, spanned by $|\uparrow \uparrow \ldots \uparrow\rangle_{[1,L]}$ and $|\downarrow \downarrow \ldots \downarrow \rangle_{[1,L]}$, the all-spins-up and all-spins-down states (justifying the use of the term ``ferromagnetic'', as it is energetically optimal for all spins to be parallel). This is well known and sometimes referred to by saying that the XXZ chain in the Ising phase is ``frustration free'', allowing to find ground states as the intersection of null spaces of non-negative terms. Details on this are provided in \cite{F}.

Let $\nu = (\nu_x)_{x=1}^L$ such that all $\nu_x\ge 0$ and define 
\begin{equation} \label{eq:finchainfield}
H_{[1,L]}(\nu) = H_{[1,L]} + \sum_{x=1}^L \nu_x {\mathcal N}_x,
\end{equation}
where 
\begin{equation}
\mathcal{N}_x:= S_x^- S_x^+ =\begin{pmatrix} 0 & 0 \\ 0 & 1 \end{pmatrix}_x = \frac{1}{2} \idty - S_x^3.
\end{equation}
Thus, up to an energy shift $H_{[1,L]}(\nu)$ is the Hamiltonian of the XXZ chain in an external transversal magnetic field of variable strength $\nu$. Choosing the field $\nu_x$ non-negative makes $|\uparrow \ldots \uparrow \rangle$ a ground state of $H_{[1,L]}(\nu)$ to energy $E_0=0$. If $\nu_x >0$ for at least one $x$, then $|\uparrow \ldots \uparrow \rangle$ is the unique ground state (as the energy of $|\downarrow \ldots \downarrow \rangle$ is $\sum_x \nu_x$).

Let
\begin{equation}
S_{[1,L]}^3 := \sum_{x=1}^L S_x^3
\end{equation}
be the total magnetization operator. Then
\begin{equation} \label{totalspincom}
[H_{[1,L]}(\nu), S_{[1,L]}^3]=0.
\end{equation}
$S_{[1,L]}^3$ has eigenvalues $\{\frac{L}{2}-N: N=0,1,\ldots,L\}$, with eigenspace corresponding to $\frac{L}{2}-N$ given by
\begin{equation}
{\mathcal H}_{[1,L]}^N := \mbox{span} \left\{ \prod_{j=1}^N S_{x_j}^- |\uparrow \ldots \uparrow \rangle_{[1,L]}: 1\le x_1 <x_2 <\ldots < x_N \le L \right\},
\end{equation}
i.e.\ the subspace of ${\mathcal H}_{[1,L]}$ spanned by all the states with $N$ down spins, called the $N$-magnon space or $N$-magnon sector. By (\ref{totalspincom}), ${\mathcal H}_{[1,L]}^N$ is invariant under $H_{[1,L]}(\nu)$ and we define
\begin{equation}
H_{[1,L]}^N(\nu) = H_{[1,L]}(\nu)\upharpoonright_{{\mathcal H}_{[1,L]}^N}.
\end{equation}

Note that we may also think of ${\mathcal H}_{[1,L]}^N$ as the $N$-particle sector of ${\mathcal H}_{[1,L]}$, where $\Omega = |\uparrow\ldots \uparrow \rangle$ is the vacuum vector and the lowering operators $S_x^-$ take the role of creation operators. Observe that (\ref{totalspincom}) is equivalent to $[H_{[1,L]}(\nu), {\mathcal N}]=0$, where 
\begin{equation}
{\mathcal N} := \sum_{x=1}^L {\mathcal N}_x = \frac{L}{2} \idty - S_{[1,L]}^3
\end{equation}
is the total particle number operator, characterized by ${\mathcal N} f = Nf$ for $f \in {\mathcal H}_{[1,L]}^N$. In this interpretation the ${\mathcal N}_x$ take the role of local particle number operators.

Thus $H_{[1,L]}^N(\nu)$ is the restriction of the XXZ-Hamiltonian to the $N$-particle states.

\section{The infinite XXZ chain} \label{sec:infinitechain}

Here we essentially follow \cite{NSS} in the introduction of the infinite XXZ chain. This can be understood as a direct definition of the GNS representation of the infinite volume limit of the finite XXZ chain relative to the all-spins-up ground state (or vacuum vector).

We start with the countable set of all {\it finite} subsets of $\Z$,
\begin{equation*}
{M}:=\{J\subset\Z : J\; \mbox{finite}\}
\end{equation*}
and let $\mathcal{H}_\Z=\ell^2(M)$ be the space of all square-summable functions from $M$ to $\C$,
\begin{equation*}
\ell^2(M):=\left\{f: \; M\to\C\;\;\mbox{with}\; \|f\|_{\ell^2(M)}:=\Big(\sum_{J\in M}|f(J)|^2\Big)^{1/2}<\infty\right\}\:.
\end{equation*}
We denote its canonical basis by $\{\phi_J\}_{J\in M}$, i.e.\ $\phi_J:M\to\C$ is given by $\phi_{J}(J')=\delta_{JJ'}$. We also have $\langle\phi_J,\phi_{J'}\rangle=\delta_{JJ'}$.

Represent these basis vectors as infinite arrays
\begin{equation} \label{eq:infarrays}
\phi_J = |\cdots\uparrow\cdots\downarrow\cdots\rangle
\end{equation}
with``$\downarrow$'' in the finitely many positions $x\in J$ and ``$\uparrow$'' otherwise. In this representation the all-spins-up vector
\begin{equation}
\phi_{\emptyset} = | \ldots \uparrow \uparrow \uparrow \ldots \rangle
\end{equation}
takes the role of the vacuum vector from which all basis vectors $\phi_J$ can be generated by successive application of the spin lowering operators,
\begin{equation}
\phi_J = \prod_{x\in J} S_x^- | \ldots \uparrow \uparrow \uparrow \ldots \rangle.
\end{equation}

By the rules described in Section~\ref{sec:finitechain} we let the operators $t_{x,x+1}$, $\pi_{x,x+1}$ and thus, via the last identity in (\ref{eq:frfree}), $h_{x,x+1}$ act on the basis vectors $\phi_J$ in the representation (\ref{eq:infarrays}).  

In the following, we denote by $\ell^2_0(M)$ the {\it finite} linear combinations of elements of $\{\phi_J: J\in M\}$ or, in other words, the subspace of finitely supported vectors in $\ell^2(M)$. Note that 

(i) $h_{x,x+1} \phi_J \in \ell_0^2 (M)$ for all $x\in \Z$, $J\in M$, and

(ii) $h_{x,x+1} \phi_J = 0$ if $\{x,x+1\} \cap J = \emptyset$. 

In particular, for any given $J\in M$ we have $h_{x,x+1} \phi_J =0$ for all but finitely many $x$.

This shows that we may define a linear operator $H_{\Z,0}$ on $\ell^2_0(M)$ by 
\begin{equation}
H_{\Z,0} f = \sum_{x\in\Z}h_{x,x+1}f \in \ell^2_0(M).
\end{equation}
By checking that the operators $t_{x,x+1}$ and $\pi_{x,x+1}$ are hermitean and isometric on $\ell_0^2(M)$ it can be verified that $H_{\Z,0}$ is hermitean and non-negative. As $\ell^2_0(M)$ is dense in $\mathcal{H}_\Z = \ell^2(M)$, $H_{\Z,0}$ is symmetric in $\mathcal{H}_\Z$.

For $N=0,1,2,\dots$ consider the subspaces
\begin{equation}
\mathcal{H}_{\Z}^N :=\overline{\mbox{span}\{\phi_J: J\in M_N\}}
\end{equation}
of $\mathcal{H}_\Z$, where $M_N:=\{J\in M\,:\,\#(J)=N\}$. Intuitively, it is clear that $\mathcal{H}_{\Z}^N$ corresponds to the $N$-particle spaces for the finite case, but, with the exception of the one-dimensional vacuum space $\mathcal{H}_{\Z}^0$, all of them are infinite-dimensional. We have
\begin{equation}
\mathcal{H}_\Z=\bigoplus_{N\geq0}\mathcal{H}_{\Z}^N\:.
\end{equation}

Also, the sets $\ell^2_0(M_N) = \mbox{span}\{\phi_J: J\in M_N\}$ are invariant under $H_{\Z,0}$, which is seen in the same way as for finite systems. Thus the restrictions
\begin{equation} 
H_{\Z,0}^N := H_{\Z,0} \upharpoonright_{\ell^2_0(M_N)} 
\end{equation}
are non-negative symmetric operators in $\mathcal{H}_{\Z}^N$.

\begin{prop} \label{prop:formbound}
For every $N=0,1,2,\ldots$, and every $f\in \ell^2_0(M_N)$, we have
\begin{equation} \label{formbound} 
\langle f, H_{\Z,0}^N f \rangle \le N(1+\frac{1}{\Delta})\|f\|^2.
\end{equation}
\end{prop}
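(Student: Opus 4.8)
The plan is to estimate each nearest-neighbour term $\langle f, h_{x,x+1}f\rangle$ separately and then sum over $x$, using that for a fixed configuration $J\in M_N$ only at most $2N$ of the bonds $\{x,x+1\}$ act nontrivially on $\phi_J$. Write $f=\sum_{J\in M_N}c_J\phi_J$ with only finitely many $c_J\neq 0$, so that $\langle f,H_{\Z,0}^N f\rangle=\sum_{x\in\Z}\langle f,h_{x,x+1}f\rangle$ is a finite sum. The first step is to pin down the two-site action: from the last line of \eqref{eq:frfree} with $\gamma=1$ together with the definitions of $t_{x,x+1}$ and $\pi_{x,x+1}$, one checks that $h_{x,x+1}\phi_J=0$ whenever $\{x,x+1\}\subseteq J$ or $\{x,x+1\}\cap J=\emptyset$, while if exactly one of $x,x+1$ lies in $J$ then $h_{x,x+1}\phi_J=\tfrac12\phi_J-\tfrac1{2\Delta}\phi_{J'}$ with $J'=J\triangle\{x,x+1\}\in M_N$ (again containing exactly one of $x,x+1$). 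Let $P_{x,x+1}$ be the orthogonal projection onto $\overline{\operatorname{span}\{\phi_J:\#(J\cap\{x,x+1\})=1\}}$. Then $h_{x,x+1}=P_{x,x+1}h_{x,x+1}P_{x,x+1}$, and on each two-dimensional invariant block $\operatorname{span}\{\phi_J,\phi_{J'}\}$ the operator $h_{x,x+1}$ equals $\tfrac12\idty-\tfrac1{2\Delta}\sigma$ with $\sigma$ the flip $\phi_J\leftrightarrow\phi_{J'}$; hence $h_{x,x+1}\ge 0$ and $\|h_{x,x+1}\|=\tfrac12(1+\tfrac1\Delta)$.

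Next, for each $x$ one estimates
\[
\langle f,h_{x,x+1}f\rangle=\langle P_{x,x+1}f,\,h_{x,x+1}P_{x,x+1}f\rangle\le\|h_{x,x+1}\|\,\|P_{x,x+1}f\|^2=\tfrac12\bigl(1+\tfrac1\Delta\bigr)\!\!\sum_{J:\,\#(J\cap\{x,x+1\})=1}\!\!|c_J|^2 .
\]
(Alternatively, pairing each such $J$ with $J'=J\triangle\{x,x+1\}$ and using $2|\Re(\overline{c_{J'}}c_J)|\le|c_J|^2+|c_{J'}|^2$ yields the same bound without invoking the operator norm.) Summing over $x$ and interchanging the finite sums gives
\[
\langle f,H_{\Z,0}^N f\rangle\le\tfrac12\bigl(1+\tfrac1\Delta\bigr)\sum_{J\in M_N}|c_J|^2\,\#\bigl\{x\in\Z:\#(J\cap\{x,x+1\})=1\bigr\}.
\]

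It remains to count the straddling bonds. For $J$ with $\#J=N$, a bond $\{x,x+1\}$ with exactly one endpoint in $J$ is incident to exactly one element of $J$, so
\[
\#\bigl\{x\in\Z:\#(J\cap\{x,x+1\})=1\bigr\}=\sum_{y\in J}\bigl(\idty\{y-1\notin J\}+\idty\{y+1\notin J\}\bigr)\le 2N
\]
(in fact this count equals $2k$, where $k\le N$ is the number of maximal runs of consecutive integers in $J$). Substituting yields $\langle f,H_{\Z,0}^N f\rangle\le N(1+\tfrac1\Delta)\sum_J|c_J|^2=N(1+\tfrac1\Delta)\|f\|^2$, as claimed.

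All computations here are routine; the only places requiring a little care are the two-site diagonalisation and the combinatorial count of straddling bonds, which is where the factor $N$ enters. I do not expect a genuine obstacle: the content of the estimate is simply that $h_{x,x+1}$ is supported on the bonds at the boundary of the occupied region, and for an $N$-particle configuration there are at most $2N$ such bonds, each contributing at most $\|h_{x,x+1}\|=\tfrac12(1+\tfrac1\Delta)$.
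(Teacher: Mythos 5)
Your proof is correct and follows essentially the same route as the paper: a two-site estimate $h_{x,x+1}\le\tfrac12(1+\tfrac1\Delta)P_{x,x+1}$ obtained by diagonalising the bond Hamiltonian, followed by summation over $x$ with a count of at most $2N$ contributing bonds per configuration. The paper packages the same bookkeeping via the particle number operators, bounding $h_{x,x+1}\le\tfrac12(1+\tfrac1\Delta)(\mathcal{N}_x+\mathcal{N}_{x+1})$ and using $\sum_x(\mathcal{N}_x+\mathcal{N}_{x+1})=2\mathcal{N}=2N$ on $\ell^2_0(M_N)$, which is equivalent to your combinatorial count of straddling bonds.
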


\begin{proof}
It is clear how the local particle number operators ${\mathcal N}_x$
act on the basis vectors $\phi_J$ represented by (\ref{eq:infarrays}) and thus on $f\in \ell^2_0(M)$, where ${\mathcal N}_x f =0$ for all but finitely many $x$. Thus we can also let the total particle number operator $\mathcal{N}=\sum_{x\in\Z}\mathcal{N}_x$ act on $\ell^2_0(M)$ and get $\mathcal{N} f = N f$ for all $f\in \ell_0^2(M_N)$.

Next we claim that for any $x\in\Z$ and for any $f\in\ell^2_0(M)$ it holds that
\begin{equation} 
\langle f,h_{x,x+1}f\rangle\leq\frac{1}{2}\left(1+\frac{1}{\Delta}\right)\langle f,\left(\mathcal{N}_x+\mathcal{N}_{x+1}\right)f\rangle\:.
\label{eq:opbound}
\end{equation}
For this purpose, it is justified to consider just the actions of these operators on $\C^2\otimes\C^2$, representing the sites $x$ and $x+1$, as they leave the other entries unaffected and we only sum over finitely many basis elements. Thus, let us look at the $\C^2\otimes\C^2$ eigenbasis 
\begin{equation}
|\uparrow \uparrow \rangle, \quad \frac{|\uparrow\downarrow\rangle+|\downarrow\uparrow\rangle}{\sqrt{2}}, \quad \frac{|\uparrow\downarrow\rangle-|\downarrow\uparrow\rangle}{\sqrt{2}}, \quad |\downarrow \downarrow \rangle
\end{equation}
of $h_{x,x+1}$, which is an eigenbasis of $\mathcal{N}_x+\mathcal{N}_{x+1}$ as well. Thus in this basis (\ref{eq:opbound}) becomes a claim for diagonal matrices, which one can check explicitly.

Now, for $f\in \ell^2_0(M_N)$ we may sum \eqref{eq:opbound} over $x$ (with only finitely many non-trivial terms) and get
\begin{align}
\langle f,H_{\Z,0}^Nf\rangle&=\left\langle f,\sum_x h_{x,x+1}f\right\rangle \leq \frac{1}{2}\left(1+\frac{1}{\Delta}\right) \left\langle f, \sum_x (\mathcal{N}_x+\mathcal{N}_{x+1})f \right\rangle=\\
&=\left(1+\frac{1}{\Delta}\right)\langle f,\mathcal{N}f\rangle = N\left(1+\frac{1}{\Delta}\right)\|f\|^2\:,
\end{align}
which completes the proof.
\end{proof}
\vspace{.5cm}

As the $H_{\Z,0}^N$ are symmetric (in particular densely defined) operators in $\mathcal{H}_{\Z}^N$, we can now conclude from Theorem 4.4(b) in \cite{Weidmann} that the $H_{\Z,0}^N$ are bounded with $\|H_{\Z,0}^N\| \le N(1+\frac{1}{\Delta})$ and thus can be uniquely extended by closure to bounded symmetric and non-negative operators defined on all of $\mathcal{H}_{\Z}^N$. We write $H_{\Z}^N := \overline{H_{\Z,0}^N}$.

It now follows from general facts on direct sums (see e.g.\ Exercise 5.43 in \cite{Weidmann}) that $H_{\Z,0}$ is essentially self-adjoint in $\mathcal{H}_{\Z}$ and that its self-adjoint closure $H_{\Z} := \overline{H_{\Z,0}}$ is non-negative and satisfies
\begin{equation} \label{eq:infvolXXZ}
H_{\Z} = \bigoplus_{N\ge 0} H_{\Z}^N\:.
\end{equation}

It is easy to incorporate a bounded exterior field $\nu = (\nu_x)_{x\in\Z}$, $0\le \nu_x \le \nu_{max}$ for all $x\in \Z$, in the above construction of the infinite XXZ chain: On $\ell_0^2(M)$ let
\begin{equation}
H_{\Z,0}(\nu) f = \sum_{x\in \Z} (h_{x,x+1} + \nu_x {\mathcal N}_x)f,
\end{equation}
which, as before, leaves the subspaces ${\mathcal H}_{\Z}^N$ invariant. By modifying the proof of Proposition~\ref{prop:formbound} one sees that the restrictions $H_{\Z,0}^N(\nu)$ of $H_{\Z,0}(\nu)$ to $\ell_0^2(M_N)$ satisfy
\begin{equation}
\langle f, H_{\Z,0}^N(\nu) f \rangle \le N(1+\frac{1}{\Delta} + \nu_{max}) \|f\|^2.
\end{equation}
Thus the $H_{\Z,0}(\nu)$ have bounded self-adjoint closures $H_{\Z}^N(\nu)$ in $\ell^2(M_N)$ and
\begin{equation} \label{eq:infvolXXZmag}
H_{\Z}(\nu) := \overline{H_{\Z,0}(\nu)} = \bigoplus_{N\ge 0} H_{\Z}^N(\nu).
\end{equation}

\section{The relation of the XXZ chain to discrete Schr\"odinger operators} \label{sec:Schrodinger}

We will now show that $H_{\Z}^N(\nu)$, the restriction of the XXZ Hamiltonian to the $N$-particle subspace, is unitarily equivalent to an $N$-body discrete Schr\"odinger operator restricted to the antisymmetric subspace $\ell^2_a(\Z^N) =\bigwedge^N \ell^2(\Z)$ of $\ell^2(\Z^N) = \bigotimes^N \ell^2(\Z)$, i.e.\ the set of those $f\in \ell^2(\Z^N)$ such that
\begin{equation}
f(x_{\sigma(1)}, \ldots, x_{\sigma(N)}) = \mbox{\rm sgn}(\sigma) f(x_1,\ldots,x_N)
\end{equation}
for all permutations $\sigma \in S_N$. In fact, the equivalence will arise by simply identifying the basis vectors $\phi_{\{x_1,x_2,\ldots,x_N\}}$, $x_1 < \ldots x_N$, of ${\mathcal H}_{\Z}^N$ with the canonical basis vectors of $\ell^2_a(\Z^N)$ given by
\begin{equation} \label{wedgebasis}
\widetilde{e}_{(x_1,x_2,\dots,x_N)}:=e_{x_1}\wedge e_{x_2}\wedge\cdots\wedge e_{x_N}=\frac{1}{\sqrt{N!}}\sum_{\sigma\in S_N} \mbox{\rm sgn}(\sigma) e_{x_{\sigma(1)}}\otimes e_{x_{\sigma(2)}}\otimes\cdots\otimes e_{x_{\sigma(N)}}.
\end{equation}
Here $e_x$ are the usual canonical basis vectors of $\ell^2(\mathcal{\Z})$. To accomplish this more formally, let $U: {\mathcal H}_{\Z}^N \to \ell^2_a(\Z^N)$ be the unique unitary operator with $U\phi_{\{x_1,x_2,\ldots,x_N\}}=\widetilde{e}_{(x_1,x_2,\dots,x_N)}$ for all $x_1< \ldots < x_N$.

Let $h_{0,\Z}$ be the discrete Laplace operator on $\ell^2(\Z)$,
\begin{equation} \label{eq:discLap}
(h_{0,\Z}u)(x)=u(x-1)+u(x+1) \quad \mbox{for all $x\in\Z$},
\end{equation}
and $h_{0,\Z}(\nu) = h_{0,\Z} + \nu$ be the discrete one-particle Schr\"odinger operator with bounded potential $\nu = (\nu_x)_{x\in\Z}$. For any $N \in \N$ let
\begin{equation}
h_{\Z}^N(\nu):=-\frac{1}{2\Delta}\sum_{j=1}^N(h_{0,\Z}(\nu))_j+W
\end{equation}
on $\bigotimes_{j=1}^N\ell^2(\Z)=\ell^2(\Z^N)$. Here $(h_{0,\Z}(\nu))_j$ acts non-trivially as $h_{0,\Z}(\nu)$ on the $j$-th component of simple tensors and $W$ is the multiplication operator by
\begin{equation} \label{eq:W}
W(x_1,\ldots,x_N) = -\#(\{j:x_{j+1}=x_j+1\}).
\end{equation}
Notice that the latter can be written as
\begin{equation}
W(x_1,\dots,x_N)=\sum_{1\leq k<l\leq N} Q(|x_k-x_l|),
\end{equation}
with $Q$ given by
\begin{equation} \label{eq:potential}
Q(r)=\begin{cases}-1,\quad &\text{\emph{if}\:\:}r=1\\\:\:\,\,0,&\text{\emph{if}\:\:}r\neq 1\:.\end{cases}
\end{equation}
Thus $h_{\Z}^N(\nu)$ is a discrete $N$-body Schr\"odinger operator in exterior potential $\nu$ and next-neighbor attractive pair interaction $Q$.

Finally, let $h_{\Z,a}^N(\nu)$ be the restriction of $h_{\Z}^N(\nu)$ to the antisymmetric subspace $\ell^2_a(\Z^N)$.

Up to an energy-shift by the particle number $N$, the operators $H_{\Z}^N(\nu)$ and $h_{\Z,a}^N(\nu)$ are unitarily equivalent:

\begin{prop} \label{prop:Schrodinger}
For all $N=0,1,2,\dots$, we have
\begin{equation} \label{eq:Schrodinger}
H_{\Z}^N(\nu)=U^{-1}(h_{\Z,a}^N(\nu)+N\cdot\idty)U\:.
\end{equation}
\end{prop}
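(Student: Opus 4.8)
The plan is to verify the asserted identity on the canonical orthonormal bases and extend it by density, using that both operators are bounded. Concretely, $H_\Z^N(\nu)$ is bounded by Proposition~\ref{prop:formbound} and the remark after it, $h_{\Z,a}^N(\nu)+N\idty$ is visibly bounded (a finite sum of bounded one-body terms, the bounded multiplication operator $W$, and a multiple of $\idty$), and $U$ is unitary. Since the finite span of $\{\phi_J:J\in M_N\}$ is dense in ${\mathcal H}_\Z^N$ and $U$ carries it onto the dense span of $\{\widetilde{e}_{(x_1,\dots,x_N)}:x_1<\dots<x_N\}$, it is enough to check
\[
U\,H_\Z^N(\nu)\,\phi_J \;=\; \bigl(h_{\Z,a}^N(\nu)+N\idty\bigr)\,\widetilde{e}_{(x_1,\dots,x_N)}
\]
for every $J=\{x_1<\dots<x_N\}\in M_N$, where on this span $H_\Z^N(\nu)$ acts as the finite sum $\sum_x(h_{x,x+1}+\nu_x{\mathcal N}_x)$.

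First I would compute the left-hand side bond by bond. Using the last line of (\ref{eq:frfree}) --- equivalently, the action of $h_{x,x+1}$ on the four vectors $|\uparrow\uparrow\rangle,\,|\uparrow\downarrow\rangle,\,|\downarrow\uparrow\rangle,\,|\downarrow\downarrow\rangle$ --- the bond $\{x,x+1\}$ annihilates $\phi_J$ whenever $x$ and $x+1$ carry equal spins (both in $J$ or both outside $J$), and when exactly one of them lies in $J$ it returns $\tfrac12\phi_J-\tfrac{1}{2\Delta}\phi_{J'}$, with $J'$ obtained from $J$ by moving that particle across the bond. Summing over $x$, the coefficient of $\phi_J$ is $\tfrac12$ times the number of ``boundary bonds'' of $J$ (bonds joining an occupied to an empty site), and the remainder is $-\tfrac{1}{2\Delta}\sum_{J'}\phi_{J'}$, the sum over all $J'$ reachable from $J$ by one nearest-neighbour hop of a particle onto an empty site; the field adds $\bigl(\sum_{x\in J}\nu_x\bigr)\phi_J$. (The vanishing on $|\downarrow\downarrow\rangle$-bonds is precisely what becomes the attractive interaction after the identification.)

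Next I would compute the right-hand side. With $\widetilde{e}_{(x_1,\dots,x_N)}=e_{x_1}\wedge\dots\wedge e_{x_N}$, the term $-\tfrac{1}{2\Delta}\sum_j(h_{0,\Z})_j$ acts as the natural derivation lift of $-\tfrac{1}{2\Delta}h_{0,\Z}$, sending the vector to $-\tfrac{1}{2\Delta}\sum_j e_{x_1}\wedge\dots\wedge(e_{x_j-1}+e_{x_j+1})\wedge\dots\wedge e_{x_N}$. A summand $e_{x_1}\wedge\dots\wedge e_{x_j\pm1}\wedge\dots$ vanishes exactly when the target $x_j\pm1$ already occurs among the $x_i$, and otherwise equals $\widetilde{e}$ of the shifted tuple, which is again strictly increasing, so \emph{no sign arises} --- this is what makes the antisymmetric restriction reproduce the spin model exactly. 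Hence this term contributes $-\tfrac{1}{2\Delta}$ times the sum of $\widetilde{e}_{J'}=U\phi_{J'}$ over exactly the hops appearing on the left, as required. The potential $W$ is permutation-symmetric, so preserves $\ell^2_a(\Z^N)$ and acts diagonally by $W(x_1,\dots,x_N)=-\#\{j:x_{j+1}=x_j+1\}$; $N\idty$ and the exterior field are diagonal too and compared directly. Matching the diagonals then reduces to
\[
\tfrac12\,\#\{\text{boundary bonds of }J\}\;=\;N-\#\{j:x_{j+1}=x_j+1\},
\]
which follows by splitting $J$ into its $k$ maximal runs of consecutive integers: each run has exactly two boundary bonds, and a run of length $\ell$ contributes $\ell$ to $N$ and $\ell-1$ to $\#\{j:x_{j+1}=x_j+1\}$, so both sides equal $k$.

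The operator-theoretic scaffolding (boundedness, density, reduction to basis vectors) is routine. I expect the only step needing genuine care to be the bond-by-bond bookkeeping: confirming that the diagonal of $h_{x,x+1}$ originates solely from domain-wall bonds, checking that a single admissible hop produces no sign in the wedge product (so that the restriction to $\ell^2_a(\Z^N)$ is exact rather than merely approximate), and verifying the run-decomposition identity linking the boundary-bond count to $W$ and the shift $N\idty$.
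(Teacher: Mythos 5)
Your proof is correct and follows the same overall strategy as the paper's (reduce to canonical basis vectors via boundedness and density, then match the diagonal and hopping parts separately), but the bookkeeping is genuinely different in two places. For the diagonal part, the paper rewrites $h_{x,x+1}$ algebraically via $\frac{1}{4}\idty-S_x^3S_{x+1}^3=-\mathcal{N}_x\mathcal{N}_{x+1}+\frac12\mathcal{N}_x+\frac12\mathcal{N}_{x+1}$, so that summing over $x$ produces $\mathcal{N}-\sum_x\mathcal{N}_x\mathcal{N}_{x+1}$, which is literally $N\idty+W$ on $\phi_{\{x_1,\dots,x_N\}}$ with no further combinatorics; you instead work from the exchange-operator form of $h_{x,x+1}$, obtain $\tfrac12\times\#\{\text{boundary bonds}\}$, and then need the run-decomposition identity $\tfrac12\#\{\text{boundary bonds}\}=N-\#\{j:x_{j+1}=x_j+1\}$ (which you verify correctly: both sides count the maximal runs of $J$). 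The paper's algebraic route buys a shorter diagonal computation; yours makes the physics of domain walls explicit. For the identification of the hopping term with the antisymmetric restriction of $\sum_j(h_{0,\Z})_j$, the paper routes the computation through the projection $P_a$ and its commutation with the free Hamiltonian, whereas you argue directly that a single admissible hop keeps the tuple strictly ordered, so no reordering sign can appear; both arguments establish (\ref{eq:laplace}). One cosmetic caveat: in comparing the exterior-field terms you implicitly use the form (\ref{eq:equivform}), where $V$ enters with coefficient $+1$ rather than inside the $-\frac{1}{2\Delta}$ prefactor as the displayed definition of $h_{\Z}^N(\nu)$ would literally suggest; this matches the paper's own derivation, so it is not a gap in your argument.
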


\begin{proof}
Due to the boundedness of the operators on both sides of (\ref{eq:Schrodinger}), we only need to check that this identity holds when applied to the canonical basis vectors. This is how all the following calculations are to be understood.

At first, we need the following two identities of operators:
\begin{equation}
\frac{1}{4}\idty-S_x^3S_{x+1}^3 = -\mathcal{N}_x\mathcal{N}_{x+1}+\frac{1}{2}\mathcal{N}_x+\frac{1}{2}\mathcal{N}_{x+1},
\end{equation}
\begin{equation}
S^1_xS^1_{x+1}+S^2_xS^2_{x+1} = \frac{1}{2}\left(S_x^-S_{x+1}^++S_{x+1}^-S_x^+\right),
\end{equation}
which can be checked by direct inspection. The two-site Hamiltonian $h_{x,x+1}$ can therefore be written as
\begin{equation}
h_{x,x+1}=-\mathcal{N}_x\mathcal{N}_{x+1}+\frac{1}{2}\mathcal{N}_x+\frac{1}{2}\mathcal{N}_{x+1}-\frac{1}{2\Delta}\left(S_x^-S_{x+1}^++S_{x+1}^-S_x^+\right).
\end{equation}
Summing over $x$ and including the magnetic field yields
\begin{align}
H_{\Z}^N(\nu) & =\sum_{x\in\Z}\left[\frac{1}{2}\mathcal{N}_x+\frac{1}{2}\mathcal{N}_{x+1}-\mathcal{N}_x\mathcal{N}_{x+1}-\frac{1}{2\Delta}\left(S_x^-S_{x+1}^++S_{x+1}^-S_x^+\right) + \nu_x {\mathcal N}_x\right]\\&=\mathcal{N}-\sum_{x\in\Z}\left[\mathcal{N}_{x}\mathcal{N}_{x+1} - \nu_x {\mathcal N}_x+\frac{1}{2\Delta}(S_x^-S_{x+1}^++S_{x+1}^-S_x^+)\right]. \nonumber
\end{align}
Now, let $J\in M_k$ and observe that $\mathcal{N}_x\mathcal{N}_{x+1}\phi_J=\phi_J$ if $\{x,x+1\} \subset J$, while it vanishes otherwise. Carrying out the summation over $x$ and writing $J=\{x_1,x_2,\dots,x_N\}$ therefore yields
\begin{equation}
\sum_{x\in\Z}\mathcal{N}_x\mathcal{N}_{x+1}\phi_{\{x_1,x_2,\dots,x_N\}} = \#(\{j:x_{j+1}=x_j+1\})\phi_{\{x_1,x_2,\dots,x_N\}}.
\end{equation}

Combining this with 
\begin{equation}
\left(\mathcal{N} + \sum_x \nu_x \mathcal{N}_x\right) \phi_{\{x_1,\ldots,x_N\}} = \left(N+\sum_{j=1}^N \nu_{x_j} \right) \phi_{\{x_1,\ldots,x_N\}}
\end{equation}
yields that, as operators on ${\mathcal H}_{\Z}^N$,
\begin{equation} \label{eq:proofpart1}
\mathcal{N}-\sum_{x\in\Z}(\mathcal{N}_x\mathcal{N}_{x+1} - \nu_x {\mathcal N}_x) =U^{-1}\left(V+W+N\cdot\idty \right)U,
\end{equation}
where 
\begin{equation} \label{eq:V}
V(x_1,\ldots,x_N) = \nu_{x_1} + \ldots + \nu_{x_N}.
\end{equation}

Moreover, it holds that
\begin{align}
(&S_x^-S_{x+1}^++S_{x+1}^-S_x^+)\phi_{\{x_1,\dots,x_N\}}\\=&\begin{cases}0\qquad&\mbox{if} \;x=x_j, x+1=x_{j+1}\;\mbox{for some} \;j\in\{1,\dots,N\},\\0&\mbox{if}\;\{x,x+1\}\cap\{x_1,\dots,x_N\}=\emptyset,\\\phi_{\{x_1,\dots,x_{j-1},x_j+1,x_{j+1},\dots,x_N\}}&\mbox{if}\;x=x_j,\;x_{j+1}>x+1\;\mbox{for some}\;j\in\{1,\dots,N\},\\\phi_{\{x_1,\dots,x_{j-1},x_j-1,x_{j+1},\dots,x_N\}}&\mbox{if}\;x+1=x_j,\;x_{j-1}<x\;\mbox{for some}\;j\in\{1,\dots,N\}\:. \nonumber
\end{cases}
\end{align}
Summing this expression over $x$ therefore yields
\begin{equation}
\sum_{x\in\Z}(S_x^-S_{x+1}^++S_{x+1}^-S_x^+)\phi_{\{x_1,\dots,x_N\}} = \sum_{j=1}^N \left(\sum_{y\in\{x_j-1,x_j+1\}\setminus\{x_1,\dots,x_N\}}\phi_{\{x_1,\dots,x_{j-1},y,x_{j+1},\dots,x_N\}}\right).
\end{equation}

To finish the proof, it remains to show that
\begin{equation} \label{eq:laplace}
\left(\sum_{j=1}^N(h_{0,\Z})_j\right)\widetilde{e}_{(x_1,\dots,x_N)} =\sum_{j=1}^N\left(\sum_{y\in\{x_j-1,x_j+1\}\setminus\{x_1,\dots,x_N\}}\widetilde{e}_{(x_1,\dots,x_{j-1},y,x_{j+1},\dots,x_N)}\right).
\end{equation}
For this purpose, let $P_a$ be the orthogonal projection onto $\ell^2_a(\Z^N)$, which is explicitly given by $P_a f = \sum_{x_1<\dots<x_N}\langle\widetilde{e}_{(x_1,\dots,x_N)},f\rangle\widetilde{e}_{(x_1,\dots,x_N)}$. For $e_{(x_1,\dots,x_N)}:=e_{x_1}\otimes \ldots \otimes e_{x_N}$ one checks that $P_a e_{(x_1,\dots,x_N)}=\frac{1}{\sqrt{N!}}\widetilde{e}_{(x_1,\dots,x_N)}$ if $x_1, \ldots, x_N$ are pairwise disjoint and $0$ otherwise.

Moreover, the $N$-particle operator $\sum_{j=1}^N(h_{0,\Z})_j$ preserves anti-symmetry, i.e.\ it commutes with the projection $P_a$. 

Thus we can calculate
\begin{eqnarray} \label{eq:proofanti}
\lefteqn{\sum_{j=1}^N(h_{0,\Z})_j \widetilde{e}_{(x_1,\dots,x_N)}} \\ & = & \sqrt{N!} \sum_{j=1}^N(h_{0,\Z})_j P_a e_{(x_1,\ldots,x_N)} = \sqrt{N!} P_a \sum_{j=1}^N(h_{0,\Z})_j e_{(x_1,\ldots,x_N)} \nonumber \\
& = & \sqrt{N!} P_a \sum_{j=1}^N (e_{(x_1,\ldots,x_{j-1},x_j+1,x_{j+1},\ldots,x_N)} + e_{(x_1,\ldots, x_{j-1}, x_j-1,x_{j+1},\ldots, x_N)}) \nonumber \\
& = & \sum_{j=1}^N\left(\sum_{y\in\{x_j-1,x_j+1\}\setminus\{x_1,\dots,x_N\}}\widetilde{e}_{(x_1,\dots,x_{j-1},y,x_{j+1},\dots,x_N)}\right), \nonumber
\end{eqnarray}
which is the desired result.
\end{proof}

The identity shown in (\ref{eq:proofanti}) can be used further to identify the restriction of $\sum_{j=1}^N(h_{0,\Z})_j$ to the anti-symmetric subspace with the adjacency operator on a suitable graph. Namely, let
\begin{equation}
\mathcal{X}^N:=\{(x_1,x_2,\dots,x_N)\in\Z^N\::\: x_1<x_2<\cdots<x_N\}
\end{equation}
be the set of ordered integer $N$-tuples, which we think of as a subgraph of $\Z^N$, i.e.\ edges are given by next neighbors with respect to the $\ell^1$-distance.

Moreover, let $h^{(\mathcal{X}^N)}_0$ denote the adjacency operator on $\mathcal{X}^N$, i.e., for any $f\in\ell^2(\mathcal{X}^N)$ and for any $x\in\mathcal{X}^N$,
\begin{equation}
\left(h^{(\mathcal{X}^N)}_0f\right)(x)=\sum_{y\in \mathcal{X}^N:\|x-y\|_1=1}f(y)\:.
\end{equation}

Next neighbors of a point $(x_1,\ldots,x_N) \in \mathcal{X}^N$ are exactly the points
\begin{equation}
(x_1,\dots,x_{j-1},y,x_{j+1},\dots,x_N), \quad j\in \{1,\ldots,N\}, \quad y\in\{x_j-1,x_j+1\}\setminus\{x_1,\dots,x_N\}.
\end{equation}
Thus, by (\ref{eq:proofanti}), under the identification of the basis vectors $\widetilde{e}_{(x_1,\dots,x_N)}$ and $e_{(x_1,\ldots,x_N)}$, the restriction of $\sum_{j=1}^N(h_{0,\Z})_j$ to $\ell^2_a(\Z^N)$ is unitarily equivalent to $h^{(\mathcal{X}^N)}_0$.

Including also the terms in (\ref{eq:proofpart1}) we thus have
\begin{equation} \label{eq:equivform}
H_{\Z}^N(\nu) = - \frac{1}{2\Delta} h^{(\mathcal{X}^N)}_0 + V + W + N\cdot \idty,
\end{equation}
as operators on $\ell^2(\mathcal{X}^N)$, where $W$ and $V$ are given by (\ref{eq:W}) and (\ref{eq:V}). Here we slightly abuse notation by not renaming a unitarily equivalent operator (which arises by simply identifying canonical basis vectors). For the rest of this paper the notation $H_{\Z}^N(\nu)$ will refer to (\ref{eq:equivform}).

In summary, we have shown that the Hamiltonian of the infinite XXZ chain in exterior magnetic field is unitarily equivalent to an infinite direct sum (over $N$) of discrete interacting $N$-body Schr\"odinger operators or, equivalently, of the Schr\"odinger-type operators (\ref{eq:equivform}). 

This is easily seen to extend to the finite XXZ chain (\ref{eq:finchainfield}), where instead of the discrete Laplacian (\ref{eq:discLap}) on $\Z$ one uses its truncation to the discrete interval $[1,L]$. In this case the direct sum (\ref{eq:infvolXXZmag}) is finite, ranging from $N=0$ to $N=L$, and all spaces are finite-dimensional.

\section{The droplet spectrum} \label{sec:droplet}

The purpose of this section is the computation of the lowest energy band of the operators $H_{\Z}^N = H_{\Z}^N(0)$, given by (\ref{eq:equivform}) with $V=0$ and corresponding to the restriction to the $N$-particle subspace of the infinite XXZ chain without exterior field. For reasons that will become clearer in the following this band will be called the doplet band. 

Much of the content of this section is due to \cite{NSS}. In fact, what we do here can be considered as a re-organized presentation of material from \cite{NSS}. Instead of reasoning within the framework of spin systems we take a more ``Schr\"odinger operator point of view'', made possible by the connections pointed out in Section~\ref{sec:Schrodinger} above.

One important observation is the fact that in the absence of an exterior field our system shows translation invariance. Thinking of the $N$ particles (or down-spins) as located along the $\Z$-axis and interacting with each other, we understand that it is only their \emph{relative} distance that matters, which allows us to reduce the problem by one dimension. 

As a result, we are going to obtain low energy states that -- because of the attractive interaction potential $W$ -- are localized at small particle distances, i.e.\ they form a droplet.

\subsection{A coordinate change} \label{chap:coord}

For $N\ge 2$ (with $N=1$ being trivial) the translation invariance of our system becomes clearer after the unitary coordinate change $V_0: \ell^2(\mathcal{X}^N) \to \ell^2(\Z\times\N^{N-1})$ given by
\begin{equation} 
(V_0f)(x,n_1,\dots,n_{N-1}) = f(x, x+n_1, x+n_1+n_2,\dots,x+n_1+n_2+\dots+n_{N-1}).
\end{equation}
Note that $(V_0^{-1}f)(x_1,\dots,x_N)=f(x_1,x_2-x_1,\dots,x_{N}-x_{N-1})$ and that as new variables $(x,n_1,\ldots,n_{N-1})$ we use the position $x$ of the first particle and the distances $n_i$ between the $i^{th}$ and $(i+1)^{th}$ particle. 

To simplify notation, set $m:=N-1$. Let us consider the unitarily equivalent operator
\begin{equation}
\widetilde{H}^N := V_0 H_{\Z}^N V_0^{-1} = -\frac{1}{2\Delta} h_0^{(\Z\times\N^m,\Gamma)} + \widetilde{W} + N\cdot \idty,
\end{equation}
where $h_0^{(\Z\times\N^m,\Gamma)} := V_0 h_0^{(\mathcal{X}^N)} V_0^{-1}$ and  $\widetilde{W}:=V_0WV_0^{-1}$. 

Here $h_0^{(\Z\times\N^m,\Gamma)}$ is the adjacency operator for the graph with vertices $\Z \times \N^m$ and edges $\Gamma$. However, the edges are not $\ell^1$-next neighbors, but images of pairs of next neighbors on $\mathcal{X}^N$ under $V_0$. This means that $(x,n_1,\ldots,n_m)$ and $(x',n_1',\ldots,n_m')$ form an edge in $\Gamma$ if and only if $(x,x+n_1,\ldots,x+n_1+\ldots+n_m)$ and $(x',x'+n_1',\ldots,x'+n_1'+\ldots+n_m')$ are $\ell^1$-next neighbors.

Let us illustrate the effect of the change of variables $V_0$ for the simplest case $N=2$. Figure \ref{fig:1} visualizes the action of $H_{\Z}^2$. Each edge is weighted with $-\frac{1}{2\Delta}$ and the vertices are labeled by the value of the potential $W+ 2\cdot \idty$.

\begin{figure}[htbp]
\centering
 \fbox{  \xygraph{ 
	!{<0cm,0cm>;<1cm,0cm>:<0cm,1cm>::}
 	!{(0,-3) }*+{}="yu"
 	!{(0,4) }*+{{x_2}}="yo"
 	!{(-4,0) }*+{}="xl"
 	!{(3,0)}*+{{x_1}}="xr"
	!{(-3,-2) }*+{\circled{1}}="-3,-2"
	!{(-3,-1) }*+{\circled{2}}="-3,-1"
	!{(-3,0) }*+{\circled{2}}="-3,0"
	!{(-3,1) }*+{\circled{2}}="-3,1" 	
 	!{(-3,2) }*+{\circled{2}}="-3,2" 
 	!{(-3,3) }*+{\circled{2}}="-3,3" 
 	!{(-2,-1) }*+{\circled{1}}="-2,-1"
	!{(-2,0) }*+{\circled{2}}="-2,0"
	!{(-2,1) }*+{\circled{2}}="-2,1"
	!{(-2,2) }*+{\circled{2}}="-2,2" 	
 	!{(-2,3) }*+{\circled{2}}="-2,3" 
 	!{(-1,0) }*+{\circled{1}}="-1,0"
 	!{(-1,1) }*+{\circled{2}}="-1,1"
	!{(-1,2) }*+{\circled{2}}="-1,2" 	
 	!{(-1,3) }*+{\circled{2}}="-1,3" 
 	!{(0,1) }*+{\circled{1}}="0,1"
	!{(0,2) }*+{\circled{2}}="0,2" 	
 	!{(0,3) }*+{\circled{2}}="0,3" 
 	!{(1,2) }*+{\circled{1}}="1,2" 	
 	!{(1,3) }*+{\circled{2}}="1,3" 
 	!{(2,3) }*+{\circled{1}}="2,3"
 	!{(-3.75,-2)}*+{}="-3.75,-2"
 	!{(-3.75,-1)}*+{}="-3.75,-1"
 	!{(-3.75,0)}*+{}="-3.75,0"
 	!{(-3.75,1)}*+{}="-3.75,1"
 	!{(-3.75,2)}*+{}="-3.75,2"
 	!{(-3.75,3)}*+{}="-3.75,3"
 	!{(-3.,3.75)}*+{}="-3,3.75"
 	!{(-2.,3.75)}*+{}="-2,3.75"
 	!{(-1.,3.75)}*+{}="-1,3.75"
 	!{(0.,3.75)}*+{}="0,3.75"
 	!{(1.,3.75)}*+{}="1,3.75"
 	!{(2.,3.75)}*+{}="2,3.75"
 	!{(2.5,-2)}*+{}
 		"yu":"yo" "xl":"xr"
 		"-3,-2"-"-3,-1"-"-3,0"-"-3,1"-"-3,2"-"-3,3"
 		"-2,-1"-"-2,0"-"-2,1"-"-2,2"-"-2,3"
 		"-1,0"-"-1,1"-"-1,2"-"-1,3"
 		"0,1"-"0,2"-"0,3"
 		"1,2"-"1,3"
 		"-3,3"-"-2,3"-"-1,3"-"0,3"-"1,3"-"2,3"
 		"-3,2"-"-2,2"-"-1,2"-"0,2"-"1,2"
 		"-3,1"-"-2,1"-"-1,1"-"0,1"
 		"-3,0"-"-2,0"-"-1,0"
 		"-3,-1"-"-2,-1"
 		"-3.75,-2"-"-3,-2" "-3.75,-1"-"-3,-1" "-3.75,0"-"-3,0" 
 		"-3.75,1"-"-3,1" "-3.75,2"-"-3,2" "-3.75,3"-"-3,3"
 		"-3,3"-"-3,3.75" "-3,3"-"-3,3.75" "-3,3"-"-3,3.75" "-2,3"-"-2,3.75" "-1,3"-"-1,3.75" "0,3"-"0,3.75" "1,3"-"1,3.75" "2,3"-"2,3.75"}}
 		\caption{$H_\Z^2=-\frac{1}{2\Delta}h_0^{({\mathcal{X}}^2)}+W+2\cdot\idty$}
 		 \label{fig:1}
 		\end{figure}

Compare this with Figure \ref{fig:2} which shows the action of $\widetilde{H}^2$, i.e.\ after the change of coordinates. Again, the edges have a weight of $-\frac{1}{2\Delta}$ and the potential depicted is now $\widetilde{W}+2\cdot \idty$.
	\begin{figure}[htbp]
	\centering
	\fbox{ \xygraph{ 
	!{<0cm,0cm>;<1cm,0cm>:<0cm,1cm>::}
	!{(0,-1) }*+{}="yu"
 	!{(0,4) }*+{{n_1}}="yo"
 	!{(-4,0) }*+{}="xl"
 	!{(4,0)}*+{{x}}="xr"
 	!{(-3,1) }*+{\circled{1}}="-3,1"
 	!{(-2,1) }*+{\circled{1}}="-2,1"
 	!{(-1,1) }*+{\circled{1}}="-1,1"
 	!{(0,1) }*+{\circled{1}}="0,1"
 	!{(1,1) }*+{\circled{1}}="1,1"
 	!{(2,1) }*+{\circled{1}}="2,1"
 	!{(3,1) }*+{\circled{1}}="3,1"
 	!{(-3,2) }*+{\circled{2}}="-3,2"
 	!{(-2,2) }*+{\circled{2}}="-2,2"
 	!{(-1,2) }*+{\circled{2}}="-1,2"
 	!{(0,2) }*+{\circled{2}}="0,2"
 	!{(1,2) }*+{\circled{2}}="1,2"
 	!{(2,2) }*+{\circled{2}}="2,2"
 	!{(3,2) }*+{\circled{2}}="3,2"
 	!{(-3,3) }*+{\circled{2}}="-3,3"
 	!{(-2,3) }*+{\circled{2}}="-2,3"
 	!{(-1,3) }*+{\circled{2}}="-1,3"
 	!{(0,3) }*+{\circled{2}}="0,3"
 	!{(1,3) }*+{\circled{2}}="1,3"
 	!{(2,3) }*+{\circled{2}}="2,3"
 	!{(3,3) }*+{\circled{2}}="3,3"
 	!{(2.5,-0.5)}*+{}
 	!{(-3.75,1.75)}*+{}="-3.75,1.75"
 	!{(-3.75,2.75)}*+{}="-3.75,2.75"
 	!{(-3.75,3.75)}*+{}="-3.75,3.75"
 	!{(-2.75,3.75)}*+{}="-2.75,3.75"
 	!{(-1.75,3.75)}*+{}="-1.75,3.75"
 	!{(-0.75,3.75)}*+{}="-0.75,3.75"
 	!{(0.25,3.75)}*+{}="0.25,3.75"
 	!{(1.25,3.75)}*+{}="1.25,3.75"
 	!{(2.25,3.75)}*+{}="2.25,3.75"
 	!{(-3,3.75)}*+{}="-3,3.75"
 	!{(-2,3.75)}*+{}="-2,3.75"
 	!{(-1,3.75)}*+{}="-1,3.75"
 	!{(0,3.75)}*+{}="0,3.75"
 	!{(1,3.75)}*+{}="1,3.75"
 	!{(2,3.75)}*+{}="2,3.75"
 	!{(3,3.75)}*+{}="3,3.75"
 	!{(3.75,2.25)}*+{}="3.75,2.25"
 	!{(3.75,1.25)}*+{}="3.75,1.25"
 	"yu":"yo" "xl":"xr"
 	"-3,1"-"-3,2"-"-3,3"
 	"-2,1"-"-2,2"-"-2,3"
 	"-1,1"-"-1,2"-"-1,3"
 	"0,1"-"0,2"-"0,3"
 	"1,1"-"1,2"-"1,3"
 	"2,1"-"2,2"-"2,3"
 	"3,1"-"3,2"-"3,3"
 	"-2,1"-"-3,2" "-2,2"-"-3,3"
 	"-1,1"-"-2,2" "-1,2"-"-2,3"
 	"0,1"-"-1,2" "0,2"-"-1,3"
 	"1,1"-"0,2" "1,2"-"0,3"
 	"2,1"-"1,2" "2,2"-"1,3"
 	"3,1"-"2,2" "3,2"-"2,3"
 	"-3,1"-"-3.75,1.75"
 	"-3,2"-"-3.75,2.75"
 	"-3,3"-"-3.75,3.75"
 	"-3,3"-"-3,3.75"
 	"-2,3"-"-2,3.75"
 	"-1,3"-"-1,3.75"
 	"0,3"-"0,3.75"
 	"1,3"-"1,3.75"
 	"2,3"-"2,3.75"
 	"3,3"-"3,3.75"
 	"-2.75,3.75"-"-2,3"
 	"-1.75,3.75"-"-1,3"
 	"-0.75,3.75"-"0,3"
 	"0.25,3.75"-"1,3"
 	"1.25,3.75"-"2,3"
 	"2.25,3.75"-"3,3"
 	"3.75,2.25"-"3,3"
 	"3.75,1.25"-"3,2"
 		} } \caption{$\widetilde{H}^2=-\frac{1}{2\Delta}\widetilde{h}_0^{\left(\Z\times\N,\Gamma\right)}+\widetilde{W}+2\cdot\idty$}
 		\label{fig:2}
 		\end{figure}

Let us now give an explicit formula for the action of $h_0^{(\Z\times\N^m,\Gamma)}$ in terms of a ``hopping operator'', i.e.\ as a combination of translations. To this end, let $T$ be the right-shift on $\ell^2(\N)$,
\begin{equation}
(Tu)(n) = \left\{ \begin{array}{ll} u(n-1), & n\ge 2, \\ 0, & n=1. \end{array} \right.
\end{equation}
Observe that its adjoint $T^*$ is then just given by the (properly truncated) left-shift operator, i.e.\ $(T^*u)(n) = u(n+1)$ for all $n\in \N$. 
By $T_j$, $T_j^*$, $j=0,1,\ldots,m$, we denote the corresponding shifts which act on the $j$-th variable of functions in $\ell^2(\Z \times \N^m)$ with the understanding that $T_0$ and $T_0^*$ correspond to (non-truncated) shifts of the first coordinate in $\Z$.
Then $h_0^{(\Z\times\N^m,\Gamma)}$ is given by
\begin{equation} \label{eq:graphlaplace}
h_0^{(\Z\times \N^m, \Gamma)} = \sum_{j=0}^{m-1} (T_{j+1}^* T_j + T_j^* T_{j+1}) + T_m + T_m^*.
\end{equation}
Equation \eqref{eq:graphlaplace} could be shown by a direct calculation. However one could also be convinced that this equation holds true by a short picture-proof. For example, consider a particle hopping to its left (the second particle in the sketch).
\begin{align}
&\underset{x}{\bullet}\overbrace{\cdots\cdots\cdots}^{n_1}\bullet\overbrace{\cdots\cdots}^{n_2}\bullet\Rightarrow\\\Rightarrow&\underset{x}{\bullet}\overbrace{\cdots\cdots}^{n_1-1}\bullet\overbrace{\cdots\cdots\cdots}^{n_2+1}\bullet \nonumber
\end{align}
This operation is described by the operator $T_1T_2^*$. The fact that a particle cannot hop if the neighboring site is occupied is reflected by the truncation of the right-shift operator, i.e.\ $(Tf)(1)=0$. 

The interaction potential $(\widetilde{W} \varphi)(x,n_1,\ldots,n_m) = - \,\#(\{i:n_i=1\}) \,\varphi(x,n_1,\ldots,n_m)$ decomposes into the sum of $m$ two-particle interactions,
\begin{equation}
\widetilde{W}=\sum_{i=1}^m Q_i\:,
\end{equation}
with $Q$ from (\ref{eq:potential}) and $Q_i$ the operator acting as $Q$ on the $i$-th coordinate. Moreover, let us define $\widetilde{\N}=\N\setminus\{1\}$. For $j\in\{1,\dots,m\}$ let $P_j$ be the orthogonal projection onto the space of functions $\varphi\in\ell^2(\Z\times\N^m)$ with
 \begin{equation}
 \supp(\varphi)\subset\Z\times\N^{j-1}\times\widetilde{\N}\times\N^{m-j}\:.
 \end{equation} 
Then it is easy to observe that $Q_i=P_i-\idty$, which enables us to write
\begin{equation}
\widetilde{W}+N\cdot\idty=\idty+\sum_{i=1}^mP_i\:.
\end{equation}
 
\subsection{Fourier transformation in $x$-direction}

Observe that $\widetilde{W}$ only depends on the relative distances between the particles and therefore is independent of $x$. Thus $\widetilde{H}^N$ is invariant under translations in $x$-direction and, analogously to \cite{NSS}, we can reduce the dimension by one with a Bloch expansion. In the new coordinates this simply becomes a Fourier transform in the $x$ variable.

For this purpose, we identify $\ell^2(\Z\times\N^m)=\ell^2(\Z; \ell^2(\N^m))$ and apply the Fourier transform
\begin{equation}
\mathcal{F}: \ell^2(\Z;\ell^2(\N^m))\rightarrow L^2([-\pi,\pi); \ell^2(\N^m))\:,
\end{equation}
given by
\begin{equation}
\big(\mathcal{F}g)(\vartheta, n_1,\dots,n_m)=\frac{1}{\sqrt{2\pi}}\sum_{x\in\Z}e^{i x\vartheta} g(x,n_1,\dots,n_m) 
\end{equation}
and inverse
\begin{equation}
(\mathcal{F}^{-1}f)(x,n_1,\dots,n_m) = \frac{1}{\sqrt{2\pi}}\int_{[-\pi,\pi)}e^{-ix\vartheta} f(\vartheta,n_1,\dots,n_m)\text{d}\vartheta\:.
\end{equation}

By translation invariance, the operator $\widehat{H}^N:=\mathcal{F}\widetilde{H}^N\mathcal{F}^{-1}$ acts as a generalized multiplication operator (or direct integral) on $L^2([-\pi,\pi); \ell^2(\N^m))=\int^\oplus_{[-\pi,\pi)}\ell^2(\N^m)\text{d}\vartheta$,
\begin{equation}
\big(\widehat{H}^Nf\big)(\vartheta,n_1,\dots,n_m)=\left(\widehat{H}^N(\vartheta)f(\vartheta)\right)(n_1,\dots,n_m)\:,
\end{equation}
where $\widehat{H}^N(\vartheta)$ has domain $\ell^2(\N^m)$.
The concrete action of $\widehat{H}^N(\vartheta)$ can be obtained from equation \eqref{eq:graphlaplace} by observing that the translation operators $T_0$ and $T_0^*$ are transformed into multiplication operators by a complex phase, $\mathcal{F}T_0^*\mathcal{F}^{-1}=e^{-i\vartheta}$ and $\mathcal{F}T_0\mathcal{F}^{-1}=e^{i\vartheta}$, while $T_j$ and $T_j^*$ for $j\ge 1$ commute with $\mathcal{F}$. 

Thus
\begin{equation} \label{eq:fiberoperator}
\widehat{H}^N(\vartheta)  = -\frac{1}{2\Delta} \left(e^{i\vartheta} T_1^* + e^{-i\vartheta} T_1 + \sum_{j=1}^{m-1} (T_{j+1}^* T_j + T_j^* T_{j+1}) + T_m^* + T_m \right) + \sum_{j=1}^m P_j + \idty.
\end{equation}

For $N=2$ these operators take the form of Jacobi matrices
\begin{equation} \label{eq:BlochN2}
\widehat{H}^2(\vartheta) = -\frac{1}{2\Delta} \left( (1+e^{i\vartheta}) T_1^* +(1+e^{-i\vartheta}) T_1 \right) + P_1 + \idty
\end{equation}
on $\ell^2(\N)$. Figure \ref{fig:3} visualizes the operators $\widehat{H}^3(\vartheta)$ on $\ell^2(\N^2)$.
\begin{figure}[htbp]
\centering
 \fbox{ \xygraph{ 
	!{<0cm,0cm>;<1cm,0cm>:<0cm,1cm>::}
 	!{(0,-1) }*+{}="yu"
 	!{(0,5.5) }*+{{n_2}}="yo"
 	!{(-1,0) }*+{}="xl"
 	!{(5.5,0)}*+{{n_1}}="xr"
 	!{(1,1) }*+{\circled{1}}="1,1"
 	!{(2,1) }*+{\circled{2}}="2,1"
 	!{(3,1) }*+{\circled{2}}="3,1"
 	!{(4,1) }*+{\circled{2}}="4,1"
 	!{(1,2) }*+{\circled{2}}="1,2"
 	!{(2,2) }*+{\circled{3}}="2,2"
 	!{(3,2) }*+{\circled{3}}="3,2"
 	!{(4,2) }*+{\circled{3}}="4,2"
 	!{(1,3) }*+{\circled{2}}="1,3"
 	!{(2,3) }*+{\circled{3}}="2,3"
 	!{(3,3) }*+{\circled{3}}="3,3"
 	!{(4,3) }*+{\circled{3}}="4,3"
 	!{(1,4) }*+{\circled{2}}="1,4"
 	!{(2,4) }*+{\circled{3}}="2,4"
 	!{(3,4) }*+{\circled{3}}="3,4"
 	!{(4,4) }*+{\circled{3}}="4,4"
 	!{(1,5) }*+{}="1,5"
 	!{(2,5) }*+{}="2,5"
 	!{(3,5) }*+{}="3,5"
 	!{(4,5) }*+{}="4,5"
 	!{(5,1) }*+{}="5,1"
 	!{(5,2) }*+{}="5,2"
 	!{(5,3) }*+{}="5,3"
 	!{(5,4) }*+{}="5,4"
 	!{(5,-0.5) }*+{}
 	"yu":"yo" "xl":"xr"
 	"1,1":"2,1":"3,1":"4,1":"5,1"
 	"1,2":"2,2":"3,2":"4,2":"5,2"
 	"1,3":"2,3":"3,3":"4,3":"5,3"
 	"1,4":"2,4":"3,4":"4,4":"5,4"
 	"1,1"-"1,2"-"1,3"-"1,4"-"1,5"
 	"2,1"-"2,2"-"2,3"-"2,4"-"2,5"
 	"3,1"-"3,2"-"3,3"-"3,4"-"3,5"
 	"4,1"-"4,2"-"4,3"-"4,4"-"4,5"
 	"2,1"-"1,2"
 	"3,1"-"2,2"-"1,3"
 	"4,1"-"3,2"-"2,3"-"1,4"
 	"4,2"-"3,3"-"2,4"
 	"4,3"-"3,4"
 	"2,4"-"1,5"  "3,4"-"2,5" "4,4"-"3,5"
 	"4,4"-"5,3" "4,3"-"5,2" "4,2"-"5,1"
 	} }
 	\caption{The action of $\widehat{H}^3(\vartheta)$.}
 	\label{fig:3}
 	\end{figure}
 	Again, the edges have a weight of $-\frac{1}{2\Delta}$ and the vertices are labeled with the values of $P_1+P_2+\idty$, which now are $1$, $2$ or $3$. As a special feature, the horizontal edges of this graph are directed now. Moving along these edges one picks up a phase $e^{\pm i \vartheta}$, depending on the direction.

\subsection{Computation of the droplet spectrum}

In this section, we will use the Bethe ansatz to determine an eigenvalue $E_N(\vartheta)$ of $\widehat{H}^N(\vartheta)$. As $\vartheta$ varies, these eigenvalues yield the droplet band of $H_{\Z}^N$ introduced in (\ref{eq:dropletband}) below.

\begin{prop} \label{prop:eigval}
For $m\geq2$ and $\vartheta\in(-\pi,\pi)$ let $\varphi_\vartheta$ be a function from $\N^m$ to $\C$, which is defined as
\begin{equation} \label{eq:bethe}
\varphi_\vartheta(n_1,n_2,\dots,n_m)=a_1(\vartheta)^{n_1}\cdot a_2(\vartheta)^{n_2}\cdot\,\dots\,\cdot a_m(\vartheta)^{n_m}=\prod_{i=1}^m a_i(\vartheta)^{n_i}\:,
\end{equation}
where the numbers $\{a_i(\vartheta)\}_{i=1}^m$ are determined by the following Jacobi matrix equation:
\begin{equation} 
\begin{pmatrix}
2\Delta& -1\\
-1& 2\Delta&-1\\
&-1&\ddots&\ddots\\
&&\ddots&\ddots&-1\\
&&&-1&2\Delta&-1\\
&&&&-1&2\Delta
\end{pmatrix}
\begin{pmatrix}
a_1(\vartheta)\\a_2(\vartheta)\\\vdots\\\vdots\\a_{m-1}(\vartheta)\\a_m(\vartheta)
\end{pmatrix}
=
\begin{pmatrix}
e^{-i\vartheta}\\0\\\vdots\\\vdots\\0\\1
\end{pmatrix}\:.
\label{eq:matriks}
\end{equation}

Then

(a) there exists a number $E_N(\vartheta)$ such that $\varphi_\vartheta$ is a solution to the equation
\begin{equation} \label{eq:eigwertgl}
\widehat{H}^N(\vartheta)\varphi_\vartheta=E_N(\vartheta)\varphi_\vartheta\:,
\end{equation}
(b) $\varphi_\vartheta$ is non-trivial and square-summable, i.e.\ $0\neq\varphi_\vartheta\in\ell^2(\N^m)$, and therefore $E_N(\vartheta)$ is an eigenvalue of $\widehat{H}^N(\vartheta)$.
\end{prop}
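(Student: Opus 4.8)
The plan is to verify the substitution of the ansatz (\ref{eq:bethe}) into (\ref{eq:eigwertgl}) by a direct computation for part~(a), and for part~(b) to solve the Jacobi system (\ref{eq:matriks}) explicitly by a transfer-matrix argument and then estimate the resulting numbers $a_i(\vartheta)$ by hand.

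For part~(a) I would first record how each operator in (\ref{eq:fiberoperator}) acts on a pure product $\varphi_\vartheta(n)=\prod_{i=1}^m a_i^{n_i}$ (abbreviating $a_i=a_i(\vartheta)$): one has $(T_j^*\varphi_\vartheta)(n)=a_j\,\varphi_\vartheta(n)$ for every $n$, while $(T_j\varphi_\vartheta)(n)$ equals $a_j^{-1}\varphi_\vartheta(n)$ if $n_j\ge2$ and $0$ if $n_j=1$; the mixed terms $T_{j+1}^*T_j$ and $T_j^*T_{j+1}$ behave the same way, each producing an indicator tied to the coordinate of its right-shift factor (the left shift never truncates), and the potential acts diagonally, $\big(\sum_jP_j+\idty\big)\varphi_\vartheta(n)=\big(1+\#\{j:n_j\ge2\}\big)\varphi_\vartheta(n)$. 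Substituting all of this and sorting the outcome into an $n$-independent part plus a part carrying the indicators $\mathbf{1}_{\{n_k\ge2\}}$, I expect to arrive at
\[
\big(\widehat{H}^N(\vartheta)\varphi_\vartheta\big)(n)=\Big(1-\tfrac1{2\Delta}\big(e^{i\vartheta}a_1+a_m\big)\Big)\varphi_\vartheta(n)+\sum_{k=1}^m\mathbf{1}_{\{n_k\ge2\}}\Big(1-\tfrac1{2\Delta}\cdot\tfrac{a_{k-1}+a_{k+1}}{a_k}\Big)\varphi_\vartheta(n),
\]
with the endpoint conventions $a_0:=e^{-i\vartheta}$ and $a_{m+1}:=1$. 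Thus $\varphi_\vartheta$ solves (\ref{eq:eigwertgl}) with $E_N(\vartheta):=1-\tfrac1{2\Delta}\big(e^{i\vartheta}a_1(\vartheta)+a_m(\vartheta)\big)$ precisely when every bracket $1-\tfrac1{2\Delta}(a_{k-1}+a_{k+1})/a_k$ vanishes, i.e.\ when $-a_{k-1}+2\Delta a_k-a_{k+1}=0$ for $k=1,\dots,m$; with the conventions above this is exactly the system (\ref{eq:matriks}). (The $a_k^{-1}$ is only apparent: for $n_k\ge2$, ``$a_k^{-1}\varphi_\vartheta(n)$'' literally denotes $\varphi_\vartheta$ evaluated at the point with $k$-th coordinate lowered by one, so nothing is divided; alternatively one invokes $a_k\ne0$ from part~(b).)

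For part~(b), since $\Delta>1$ the matrix in (\ref{eq:matriks}) is strictly diagonally dominant, hence invertible, so $(a_1,\dots,a_m)$ is uniquely determined. Adjoining $a_0=e^{-i\vartheta}$ and $a_{m+1}=1$, the extended vector $(a_k)_{k=0}^{m+1}$ obeys the homogeneous recursion $a_{k+1}-2\Delta a_k+a_{k-1}=0$ for $k=1,\dots,m$, whose characteristic roots are $q$ and $q^{-1}$ with $q:=\Delta-\sqrt{\Delta^2-1}\in(0,1)$; hence $a_k=\alpha q^k+\beta q^{-k}$, and determining $\alpha,\beta$ from the two boundary values yields, after simplification,
\[
a_k=\frac{e^{-i\vartheta}\,q^{k}\big(1-q^{2(m+1-k)}\big)+q^{m+1-k}\big(1-q^{2k}\big)}{\big(1-q^{2k}\big)+q^{2k}\big(1-q^{2(m+1-k)}\big)},\qquad k=1,\dots,m.
\]
Writing $u:=q^{k}\in(0,1)$ and $v:=q^{m+1-k}\in(0,1)$, the denominator is the positive number $(1-uv)(1+uv)$, and, since $|e^{-i\vartheta}|=1$, the modulus of the numerator is at most $u(1-v^2)+v(1-u^2)=(u+v)(1-uv)$; dividing, $|a_k|\le(u+v)/(1+uv)<1$, the last step being the elementary inequality $(1-u)(1-v)>0$. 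Moreover $a_k\ne0$, for otherwise $e^{-i\vartheta}=-v(1-u^2)/(u(1-v^2))$ would be a negative real number, forcing $\vartheta=\pm\pi\notin(-\pi,\pi)$. Hence $0<|a_k|<1$ for every $k$, so $\varphi_\vartheta(1,\dots,1)=\prod_{k=1}^m a_k\ne0$ and $\|\varphi_\vartheta\|^2=\prod_{k=1}^m|a_k|^2/(1-|a_k|^2)<\infty$, i.e.\ $0\ne\varphi_\vartheta\in\ell^2(\N^m)$, and therefore $E_N(\vartheta)$ is a genuine eigenvalue of $\widehat{H}^N(\vartheta)$.

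The work in part~(a) is routine bookkeeping with the shift operators; I expect the only genuinely delicate point to be the bound $|a_k|<1$ in part~(b). A crude estimate through the operator norm of $(2\Delta I-J)^{-1}$, with $J$ the adjacency matrix of the path on $m$ vertices, degenerates as $\Delta\downarrow1$, so one is forced to use the explicit solution of the recursion and to exploit that the right-hand side of (\ref{eq:matriks}) is supported only at the two endpoints $k=1,m$; this is precisely what produces the clean factorizations $(u+v)(1-uv)$ and $(1-uv)(1+uv)$ and collapses the whole estimate to $(1-u)(1-v)>0$.
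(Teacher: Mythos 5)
Your proposal is correct. Part (a) is essentially the paper's own argument: the paper splits into the cases ``all $n_i>1$'' versus ``$n_i=1$ for $i\in I$'' and demands that the surviving terms (\ref{eq:term1})--(\ref{eq:term3}) vanish, which is exactly your condition that each bracket $1-\tfrac{1}{2\Delta}(a_{k-1}+a_{k+1})/a_k$ vanish with the conventions $a_0=e^{-i\vartheta}$, $a_{m+1}=1$; your indicator-function bookkeeping and your formula $E_N(\vartheta)=1-\tfrac{1}{2\Delta}(e^{i\vartheta}a_1+a_m)$ agree with the paper's (\ref{eq:eigval}). In part (b) you take a genuinely different route to the same estimates. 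The paper multiplies (\ref{eq:matriks}) by $e^{i\vartheta/2}$ and re-centres the index ($b_k=e^{i\vartheta/2}a_{k+L+1}$) so that the system becomes conjugate-symmetric, forcing $b_n=\overline{b_{-n}}$ and hence $c_+=\overline{c_-}$; then $|b_n|^2$ is an explicitly monotone function of $|n|$, so only the endpoint $|b_L|$ needs to be bounded (giving the stronger bound $|b_L|^2\le 1/\Delta^2$), and non-vanishing follows from $\Re(c_+)\neq 0$ for $\vartheta\neq\pm\pi$. You instead impose the boundary values $a_0=e^{-i\vartheta}$, $a_{m+1}=1$ directly on the two-term solution $\alpha q^k+\beta q^{-k}$ and bound \emph{every} $|a_k|$ individually via the factorization $|a_k|\le (u+v)/(1+uv)<1$ with $u=q^k$, $v=q^{m+1-k}$, and your non-vanishing argument (that $a_k=0$ would force $e^{-i\vartheta}$ to be a negative real) is a clean substitute for the paper's. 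Your version is somewhat shorter and self-contained for the proposition as stated; the paper's symmetrized closed form (\ref{eq:resub}) is, however, what it reuses afterwards to derive the explicit eigenvalue formula (\ref{eq:dropleteigval}) and the positivity $a_k(0)>0$ in Corollary~\ref{cor:specmin}, so if you continued past the proposition you would still want to convert your expression into that form.
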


Here we have excluded the cases $m=0$ and $m=1$ as well as $\vartheta=-\pi$, which will be commented on at the end of this section. Note that (\ref{eq:eigwertgl}) is initially understood as a difference equation, in which $\widehat{H}^N(\vartheta)$ acts on general functions $\varphi:\N^m\to \C$ formally via (\ref{eq:fiberoperator}). That this can also be understood as a Hilbert space equation follows from (b).

\begin{proof}
(a) We show this part by an explicit construction of the solution. First, observe that for $\Delta>1$, equation \eqref{eq:matriks} has a unique solution, since the matrix on the left side of this equation is positive definite in this case. This means that $\varphi_\vartheta$ is well-defined.

The proof of part (b) below will also show that $a_i(\vartheta)\neq 0$ for all $i=1,\ldots,m$.

We have to distinguish two cases.

\begin{itemize}

\item First case: All $n_i>1$. In this case, all of the shift operators of the form $T_jT^*_{j+1}$ as well as all projections $P_j$ yield a contribution and we obtain from (\ref{eq:fiberoperator}) and (\ref{eq:bethe}) that
\begin{align}
\left(\widehat{H}^N(\vartheta)\varphi_\vartheta\right)(n_1,\dots,n_m)&= \\ \nonumber
=\Bigg[-\frac{1}{2\Delta}\Bigg(e^{i\vartheta}a_1+e^{-i\vartheta}\frac{1}{a_1}&+\frac{a_2}{a_1}+\frac{a_1}{a_2}+\dots\\ \nonumber \dots+&\frac{a_m}{a_{m-1}} + \frac{a_{m-1}}{a_m}+a_m + \frac{1}{a_m}\Bigg)+(m+1)\Bigg]\varphi_\vartheta(n_1,\dots,n_m)\:,
\end{align}
where we have omitted the argument $\vartheta$.

\item Second case: For some $i\in I\subset\{1,\dots,m\}$ we have that $n_i=1$. This means that all terms in \eqref{eq:fiberoperator}, which are operators of the form $T^*_{i-1}T_i$ and $T^*_{i+1}T_i$ for $i\in I$ as well as all projections $\{P_i\}_{i\in I}$ give zero contribution in this case. Thus for $n_i=1$, terms of the form 
\begin{align}
-\frac{1}{2\Delta}\left(e^{-i\vartheta}\frac{1}{a_1}+\frac{a_2}{a_1}\right)+1 & \quad \mbox{for $i=1$}, \label{eq:term1} \\
-\frac{1}{2\Delta}\left(\frac{a_{i-1}}{a_i}+\frac{a_{i+1}}{a_i}\right)+1 & \quad\mbox{for $1<i<m$}, \label{eq:term2} \\
-\frac{1}{2\Delta}\left(\frac{a_{m-1}}{a_m}+\frac{1}{a_m}\right)+1 & \quad \mbox{for $i=m$} \label{eq:term3}
\end{align}
will not occur in the prefactor of $\varphi_\vartheta$ as it did in the first case.
The condition that equation \eqref{eq:eigwertgl} holds for the same number $E_N(\vartheta)$ at all sites $(n_1,\ldots,n_m) \in \N^m$ therefore requires that all the $m$ terms in (\ref{eq:term1}), (\ref{eq:term2}) and (\ref{eq:term3}) vanish. But this is easily seen to be equivalent to (\ref{eq:matriks}).
\end{itemize}

(b) The proof of this claim mainly consists in explicitly solving equation \eqref{eq:matriks} and showing that $0<|a_i(\vartheta)|<1$ for all $i$ and $\vartheta$, which implies square-summability. For this purpose, multiply equation \eqref{eq:matriks} by $e^{i\vartheta/2}$ and define 
\begin{equation} \label{eq:relabel}
b_k(\vartheta):=e^{i\vartheta/2}a_{k+L+1}(\vartheta)
\end{equation}
for $k\in\{-L,\dots,L\}$, where $L:=\frac{m-1}{2}$. Observe that the $k$'s are half-integer for even $m$ and integer for odd $m$. Now, equation \eqref{eq:matriks} looks like
\begin{equation} 
\begin{pmatrix}
2\Delta& -1\\
-1& 2\Delta&-1\\
&-1&\ddots&\ddots\\
&&\ddots&\ddots&-1\\
&&&-1&2\Delta&-1\\
&&&&-1&2\Delta
\end{pmatrix}
\begin{pmatrix}
b_{-L}(\vartheta)\\b_{-(L-1)}(\vartheta)\\\vdots\\\vdots\\b_{L-1}(\vartheta)\\b_L(\vartheta)
\end{pmatrix}
=
\begin{pmatrix}
e^{-i\vartheta/2}\\0\\\vdots\\\vdots\\0\\e^{i\vartheta/2}
\end{pmatrix}\:.
\label{eq:matriks3}
\end{equation}
Again we are going to omit the argument $\vartheta$. Using only the homogeneous part of (\ref{eq:matriks3}), i.e.\ the rows corresponding to $-(L-1) \le k \le L-1$ shows that the solution is of the form
\begin{equation}
b_n=c_+e^{\rho n}+c_-e^{-\rho n}\:, \qquad -L\leq n\leq L\:,
\end{equation}
where $\pm\rho=\log(\Delta\pm\sqrt{\Delta^2-1})$ are the logarithms of the eigenvalues of the transfer matrix 
\begin{equation}
T=\begin{pmatrix} 0&1\\-1&2\Delta\end{pmatrix}\:.
\label{eq:transfer}
\end{equation}
As \eqref{eq:matriks3} is symmetric up to complex conjugation we have $b_n=\overline{b_{-n}}$ for all $n$, and thus $c_+=\overline{c_-}$. Therefore we may write
\begin{equation} \label{eq:grundzahl}
b_n=c_+e^{\rho n}+\overline{c_+}e^{-\rho n}=2\text{Re\:}(c_+)\cosh(\rho n)+2i\text{Im\:}(c_+)\sinh(\rho n)
\end{equation}
and
\begin{equation}
\left|b_n\right|^2=4\left(\text{Re\:}(c_+)\right)^2\cosh(\rho n)^2+4\left(\text{Im\:}(c_+)\right)^2\sinh(\rho n)^2\:,
\label{eq:schluss}
\end{equation}
which is monotonically increasing in $|n|$.
The constant $c_+$ can be determined by the boundary condition given by the first line of the matrix equation \eqref{eq:matriks3}:
\begin{equation}
2\Delta \left(c_+ e^{\rho L}+\overline{c_+}e^{-\rho L}\right)-\left(c_+e^{\rho(L-1)}+\overline{c_+}e^{-\rho(L-1)}\right)=e^{-i\vartheta/2}\;,
\end{equation}
which yields
\begin{equation}
c_+=\frac{\cos\vartheta/2}{4\Delta\cosh(\rho L)-2\cosh(\rho(L-1))}-i\frac{\sin\vartheta/2}{4\Delta\sinh(\rho L)-2\sinh(\rho(L-1))}\:.
\label{eq:cplus}
\end{equation}
To give an upper bound for the $|b_n|$'s we use the already observed fact that $|b_n|$ is monotonically increasing in $|n|$. It therefore suffices to estimate $|b_L|$:
\begin{equation} \label{eq:bLabs}
|b_L|^2=\frac{4\cos(\vartheta/2)^2\cosh(\rho L)^2}{(4\Delta\cosh(\rho L)-2\cosh(\rho(L-1)))^2}+\frac{4\sin(\vartheta/2)^2\sinh(\rho L)^2}{(4\Delta\sinh(\rho L)-2\sinh(\rho(L-1)))^2}\;.
\end{equation}
As we have $2\cosh(\rho(L-1))\leq2\Delta\cosh(\rho L)$ and $2\sinh(\rho(L-1))\leq2\Delta\sinh(\rho L)$ it is clear that
\begin{align}
4\Delta\cosh(\rho L)-2\cosh(\rho(L-1))&\geq2\Delta\cosh(\rho L)\;,\\
4\Delta\sinh(\rho L)-2\sinh(\rho(L-1))&\geq2\Delta\sinh(\rho L)\:.
\end{align}
This and (\ref{eq:bLabs}) readily imply that $|b_L|^2 \le  1/\Delta^2 < 1$. 

Moreover, observe that for $\vartheta\neq\pm\pi$ we have that $\cos(\vartheta/2)^2>0$ and therefore $(\text{Re\:}(c_+))^2>0$, which implies $|b_n|>0$ for all $n\in\{-L,\dots,L\}$ according to equation \eqref{eq:schluss}. This completes the proof of (b).
\end{proof}

Using equations \eqref{eq:relabel}, \eqref{eq:grundzahl} and \eqref{eq:cplus}, we are now able to give an explicit formula for $a_k(\vartheta)$. Observe that the definition $\rho=\log(\Delta+\sqrt{\Delta^2-1})$ is equivalent to $\Delta=\cosh\rho$, which allows to simplify
\begin{align}
4\Delta\cosh(\rho L)-2\cosh(\rho(L-1))&=2\cosh((L+1)\rho)\;,\\
4\Delta\sinh(\rho L)-2\sinh(\rho(L-1))&=2\sinh((L+1)\rho)\:.
\end{align}
This and \eqref{eq:grundzahl}, \eqref{eq:cplus} yield 
\begin{equation} \label{eq:eigfunc}
b_k(\vartheta)=\frac{\cos({\vartheta}/{2})\cosh(k\rho)}{\cosh((L+1)\rho)}+i\frac{\sin(\vartheta/2)\sinh(k\rho)}{\sinh((L+1)\rho)}
\end{equation}
and with \eqref{eq:relabel},
\begin{equation} \label{eq:resub}
a_k(\vartheta) = e^{-i\vartheta/2}\left(\frac{\cos({\vartheta}/{2})\cosh\left(\left(k-\frac{N}{2}\right)\rho\right)}{\cosh\left(\frac{N}{2}\rho\right)}+i\frac{\sin(\vartheta/2)\sinh\left(\left(k-\frac{N}{2}\right)\rho\right)}{\sinh\left(\frac{N}{2}\rho\right)}\right) 
\end{equation}
for $k\in\{1,\dots,N-1\}$, where we have resubstituted $m+1=N$.

Finally, let us compute the eigenvalue $E_N(\vartheta)$. This is most easily done by evaluating (\ref{eq:eigwertgl}) at the site $n_1=\ldots = n_m =1$, because at this site $\widehat{H}^N(\vartheta)$ acts simply as $-\frac{1}{2\Delta}\left(e^{i\vartheta}T_1^*+T_m^*\right)+\idty$. This yields
\begin{equation} \label{eq:eigval}
\left(\widehat{H}_N(\vartheta)\varphi_\vartheta\right)(1,\dots,1) = \left[-\frac{1}{2\Delta}(e^{i\vartheta}a_1(\vartheta)+a_m(\vartheta))+1\right] \varphi_\vartheta(1,\dots,1) = E_N(\vartheta) \varphi_\vartheta(1,\dots,1)\;.
\end{equation}
With the values from (\ref{eq:resub}) and after some simplification we arrive at
\begin{equation} \label{eq:dropleteigval}
E_N(\vartheta)=\tanh(\rho)\cdot\frac{\cosh(N\rho)-\cos(\vartheta)}{\sinh(N\rho)}\:,
\end{equation}
where, as before, $\rho$ is the positive number such that $\cosh(\rho)=\Delta$. 

The reason that $\vartheta=-\pi$ had to be excluded in the above construction is that in this case the eigenfunction given by (\ref{eq:bethe}) may vanish identically. Nevertheless, the formula (\ref{eq:dropleteigval}) gives an eigenvalue of $\widehat{H}^N(\vartheta)$ also for $\vartheta=-\pi$. This is seen as follows:

\begin{itemize}

\item If $N$ is odd, than nothing changes in the above construction, as it still holds that $0< |a_k(\vartheta)|$, $1\le k \le N-1$ (note that the imaginary part of the second factor in (\ref{eq:resub}) is non-zero).

\item If $N$ is even, then (\ref{eq:resub}) yields $a_{N/2}(-\pi) = 0$ so that $\varphi_{-\pi}$ given by (\ref{eq:bethe}) would vanish. In this case the correct eigenfunction is given by
\begin{equation}
\varphi_{-\pi}(n_1,\ldots,n_{N-1}) = \prod_{k=1}^{N/2-1} a_k(-\pi)^{n_k} \cdot \delta_1(n_{N/2}) \cdot \prod_{k=N/2+1}^{N-1} a_k(-\pi)^{n_k}.
\end{equation}
Note that this eigenfunction is supported on the hypersurface $n_{N/2}=1$ in $\N^{N-1}$.

\end{itemize}

The operators $\widehat{H}^N(\vartheta)$ are a Bloch decomposition for $\widehat{H}^N$. Thus the union of their eigenvalues $E_N(\vartheta)$ is contained in the spectrum of $\widehat{H}^N$ (and thus of $H_{\Z}^N$),
\begin{equation} \label{eq:dropletband}
\delta_N := \left[\tanh(\rho)\cdot\frac{\cosh(N\rho)-1}{\sinh(N\rho)},\tanh(\rho)\cdot\frac{\cosh(N\rho)+1}{\sinh(N\rho)}\right] \subset \sigma(H_{\Z}^N) \;.
\end{equation}

We refer to $\delta_N$ as the {\it droplet band} or {\it droplet spectrum} of $H_{\Z}^N$. Let us discuss the reasons for this terminology. Via the Bloch decomposition the eigenfunctions $\varphi_{\vartheta}$ of $\widehat{H}^N(\vartheta)$ yield generalized eigenfunctions
\begin{equation}
\tilde{f}_{\vartheta}(x,n_1,\ldots,n_m) := e^{-i\vartheta x} \varphi_{\vartheta}(n_1,\ldots,n_m)
\end{equation}
for $\widetilde{H}^N$, i.e.\
\begin{equation}
\widetilde{H}^N \tilde{f}_{\vartheta} = E_N(\vartheta) \tilde{f}_{\vartheta}
\end{equation}
in the sense of a difference equation on $\Z \times \N^m$. After undoing the change of variables $V_0$ we get generalized eigenfunctions
\begin{equation} \label{eq:genefs}
f_{\vartheta}(x_1,\ldots,x_N) = e^{-i\vartheta x_1} \varphi_{\vartheta}(x_2-x_1, x_3-x_2, \ldots, x_N-x_{N-1})
\end{equation}
of $H_{\Z}^N$ corresponding to energies in the droplet band. These functions attain their maximal absolute value if and only if $x_{i+1}-x_i=1$ for all $i$ and decay exponentially in $x_{i+1}-x_i$ for all $i$. In our $N$-body interpretation this means that all $N$ particles (or down spins in the XXZ chain) are tightly packed together, forming a droplet, with corrections decaying exponentially in all the particle distances.

While we assumed $m\ge 2$ and thus $N\ge 3$ in Theorem~\ref{prop:eigval}, the formula (\ref{eq:dropletband}) holds true also for $N=1$ and $N=2$. This is trivial for $N=1$ as $H_{\Z}^1 = - \frac{1}{2\Delta} h_0^{(\Z)} + \idty$, whose spectrum is 
\begin{equation} \label{eq:delta1}
[1-1/\Delta, 1+1/\Delta] = \delta_1.
\end{equation}
Thus in this case $\delta_1$ covers the entire spectrum, although the interpretation as droplet spectrum does not make sense in the presence of a single particle.

For $N=2$ we have the Bloch decomposition (\ref{eq:BlochN2}), but defining the coefficients in the Bethe ansatz by (\ref{eq:matriks}) does not make sense in this case. Instead, observe by direct inspection that, for $\vartheta \in (-\pi,\pi)$,
\begin{equation}
\varphi_{\vartheta}(n) = \left( \frac{1+e^{i\vartheta}}{2\Delta} \right)^n
\end{equation}
is an exponentially decaying eigenfunction of $\widehat{H}^2(\vartheta)$ to the eigenvalue $E_2(\vartheta) = 1- \frac{1+\cos(\vartheta)}{2\Delta^2}$. For $\vartheta=-\pi$ we have the eigenfunction $\varphi_{-\pi}(n) = \delta_1(n)$ with eigenvalue $1$ (note that $\widehat{H}^2(-\pi)$ is the diagonal matrix $P_1+ \idty$). The union of these eigenvalue over $-\pi \le \vartheta < \pi$ is the droplet band
\begin{equation} \label{eq:delta2}
\left[ 1- \frac{1}{\Delta^2}, 1 \right] = \delta_2
\end{equation}
with corresponding droplet generalized eigenfunctions $f_{\vartheta}(x_1,x_2) = e^{-i\vartheta x_1} \varphi_{\vartheta}(x_2-x_1)$, as in (\ref{eq:genefs}).

\begin{cor} \label{cor:specmin}
The minimum of $\sigma(H_{\Z}^N)$ is given by
\begin{equation} \label{eq:specminN}
\min\sigma(H_{\Z}^N)=E_N(0)=\tanh(\rho)\cdot\frac{\cosh(N\rho)-1}{\sinh(N\rho)}\:.
\end{equation}
\end{cor}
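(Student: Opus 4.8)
The plan is to establish the two inequalities $\min\sigma(H_\Z^N)\ge E_N(0)$ and $\min\sigma(H_\Z^N)\le E_N(0)$ separately. The upper bound is immediate: by \eqref{eq:dropletband} the droplet band $\delta_N$ is contained in $\sigma(H_\Z^N)$, and its left endpoint is precisely $E_N(0)=\tanh(\rho)\cdot\frac{\cosh(N\rho)-1}{\sinh(N\rho)}$, so $\min\sigma(H_\Z^N)\le E_N(0)$. (For $N=1,2$ one invokes \eqref{eq:delta1}, \eqref{eq:delta2} instead, where the same endpoint formula holds.) The real content is the matching lower bound: no point of $\sigma(H_\Z^N)$ lies below $E_N(0)$.

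For the lower bound I would work with the representation \eqref{eq:equivform}, $H_\Z^N=-\frac{1}{2\Delta}h_0^{(\mathcal{X}^N)}+W+N\cdot\idty$ on $\ell^2(\mathcal{X}^N)$, and produce a pointwise-positive ``ground state'' $\psi$ satisfying $H_\Z^N\psi=E_N(0)\psi$ as a difference equation, then apply a ground-state-transform / Perron–Frobenius argument. Concretely, $E_N(0)$ is realized by the Bethe eigenfunction at $\vartheta=0$: from \eqref{eq:resub} with $\vartheta=0$ one gets $a_k(0)=\cosh((k-\tfrac N2)\rho)/\cosh(\tfrac N2\rho)\in(0,1)$, so $\varphi_0(n_1,\dots,n_m)=\prod_{i=1}^m a_i(0)^{n_i}$ is strictly positive and square-summable, and $f_0(x_1,\dots,x_N)=\varphi_0(x_2-x_1,\dots,x_N-x_{N-1})$ is a strictly positive, bounded (in fact $\ell^\infty$) solution of $H_\Z^N f_0=E_N(0)f_0$ on $\mathcal{X}^N$ in the difference-equation sense. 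The operator $E_N(0)-H_\Z^N=\frac{1}{2\Delta}h_0^{(\mathcal{X}^N)}-W-N\cdot\idty+E_N(0)$ has nonnegative off-diagonal entries (the adjacency operator has entries $+1$), so $-H_\Z^N$ generates a positivity-preserving semigroup; the existence of a strictly positive solution to $H_\Z^N f_0=E_N(0)f_0$ then forces $E_N(0)\le\min\sigma(H_\Z^N)$ by a standard Allegretto–Piepenbrink / ground-state-representation argument on the connected graph $\mathcal{X}^N$. Equivalently, one can conjugate: writing a test vector $g\in\ell^2_0(\mathcal{X}^N)$ as $g=f_0\cdot u$, one checks the Jacobi-type identity
\begin{equation}
\langle g,(H_\Z^N-E_N(0))g\rangle=\frac{1}{2\Delta}\sum_{\{x,y\}\in\mathcal{X}^N,\ \|x-y\|_1=1}f_0(x)f_0(y)\,\lvert u(x)-u(y)\rvert^2\ \ge\ 0,
\end{equation}
where the sum is over edges of $\mathcal{X}^N$; here the key cancellation is that the diagonal contributions of $W$, $N\cdot\idty$ and $E_N(0)$ combine, via $H_\Z^N f_0=E_N(0)f_0$, into exactly the ``$-f_0(x)\sum_{y}f_0(y)$'' terms needed to complete the Dirichlet form. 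Since $\ell^2_0(\mathcal{X}^N)$ is a core, $H_\Z^N\ge E_N(0)$.

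The main obstacle is the rigorous bookkeeping in that ground-state (Jacobi form) computation: one must verify that $f_0$, although only bounded rather than $\ell^2$ (it is constant-modulus in the center-of-mass direction), genuinely solves the difference equation at every vertex of $\mathcal{X}^N$ — including the ``boundary'' vertices where some $x_{i+1}=x_i+1$ so that a hopping term is suppressed — and that this is precisely what makes the discrete Dirichlet identity above hold with the stated nonnegative right-hand side. This is where the special structure of the Bethe solution (the equations \eqref{eq:matriks}, forcing the terms \eqref{eq:term1}–\eqref{eq:term3} to vanish) is used: those are exactly the conditions guaranteeing that the would-be ``missing'' terms at the boundary already cancel, so the identity is uniform over all vertices. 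The cases $N=1$ (where $H_\Z^1=-\frac{1}{2\Delta}h_0^{(\Z)}+\idty$ has spectrum $[1-1/\Delta,1+1/\Delta]$ with minimum $1-1/\Delta=E_1(0)$) and $N=2$ (using the explicit eigenfunction $\varphi_\vartheta(n)=((1+e^{i\vartheta})/2\Delta)^n$ at $\vartheta=0$) are handled directly by the same scheme or simply by inspection.
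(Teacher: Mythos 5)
Your proposal is correct and follows essentially the same route as the paper: the upper bound comes from $E_N(0)$ being the left endpoint of the droplet band $\delta_N\subset\sigma(H_{\Z}^N)$, and the lower bound comes from the strict positivity of the Bethe generalized eigenfunction $f_0$ at $\vartheta=0$ together with an Allegretto--Piepenbrink argument on the graph $\mathcal{X}^N$. The only difference is that the paper simply cites Theorem~3.1 of \cite{HaeselerKeller} for the Allegretto--Piepenbrink step, whereas you sketch its proof via the ground-state transform; that sketch is sound but not needed.
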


\begin{proof}
For $\vartheta=0$ look at the eigenfunction $\varphi_0$. From equation \eqref{eq:resub} we see that
\begin{equation}
a_k(0)=\frac{\cosh\left(\left(k-\frac{N}{2}\right)\rho\right)}{\cosh\left(\frac{N}{2}\rho\right)}>0\quad \mbox{for all $N$, $k$}
\end{equation}
and therefore $\varphi_0>0$. From (\ref{eq:genefs}) we see that $H_{\Z}^N$ has positive generalized eigenfunction $f_0(x_1,\ldots,x_N) = \varphi_0(x_2-x_1, \ldots, x_N-x_{N-1})$ to $E_N(0)$.  By the Allegretto-Piepenbrink Theorem (see Theorem~3.1 of \cite{HaeselerKeller} for a version for Laplacians on general graphs)  this implies that 
\begin{equation}
E_N(0)\leq\min\sigma\left(H_{\Z}^N\right)\:.
\end{equation}
But we already know that $E_N(0)\in\sigma\left(H_{\Z}^N \right)$, which means that $E_N(0)$ has to be the spectral minimum.
\end{proof}

\section{An HVZ type theorem} \label{sec:HVZ}

Our next goal is a more thorough description of the spectrum of the restrictions $H_{\Z}^N$ of the free XXZ chain to the $N$-particle subspace. Having identified these operators as discrete $N$-body Schr\"odinger operators in Section~\ref{sec:Schrodinger} will allow us to do this in terms of cluster decompositions of the $N$-particle system and a related HVZ type theorem. In this section we will consider a more general class of $N$-body operators, followed by an application to our specific model in Section~\ref{sec:gap} below.

While HVZ theorems and the underlying methods are very classical and it is unlikely that our particular result is entirely new, we have not found a proper reference for this in the context of discrete Schr\"odinger operators.  We start with a result for the ``full'' $N$-particle Hilbert space, i.e.\ without symmetry restrictions, and then show that a similar result holds for the symmetric and anti-symmetric subspaces. As we are only dealing with bounded Hamiltonians in $\ell^2$-spaces, the proofs are quite straightforward and considerably simpler than the technically more involved arguments in proofs of HVZ theorems for classical (continuum) Schr\"odinger operators, see e.g.\ Section~XIII.5 in \cite{ReedSimonIV}, where also many references can be found, including to the original works by Hunziker, Van Winter and Zhislin.

\subsection{An HVZ type theorem} \label{subsec:HVZ}

As one-particle operator we choose an arbitrary bounded, self-adjoint and translation invariant operator $h$ in $\ell^2(\Z)$, i.e.\
\begin{equation}
[h,T_x] = 0 \quad \mbox{for all $x\in \Z$},
\end{equation}
where $(T_x f)(y) = f(y-x)$ for all $y\in \Z$.

For the interaction potential $Q:\{0,1,2,\ldots\} \to \R$ we assume that
\begin{equation}
\lim_{r\to\infty} Q(r) =0.
\end{equation}

The interacting $N$-particle Hamiltonian on $\ell^2(\Z^N) = \bigotimes^N \ell^2(\Z)$ is given by
\begin{equation} \label{eq:NpartHam}
H_N :=\sum_{i=1}^N h_i + \sum_{1\leq i<j\leq N}Q_{ij}\:,
\end{equation}
where 
\begin{equation}
h_i:=\idty\otimes\idty\otimes\dots\otimes\idty\otimes \underbrace{h}_{i^{th} entry} \otimes \idty\otimes\dots\otimes \idty
\end{equation}
and 
\begin{equation}
Q_{ij}(z)=Q(|z_i-z_j|),\quad\text{for $z\in \Z^N$}\;.
\end{equation}

\begin{prop} \label{prop:HVZ}
For any $N\in \Z$ let $H_N$ be the $N$-particle Hamiltonian defined by (\ref{eq:NpartHam}) with assumptions as above. Then, for any $m\in\{1,2,\dots, N-1\}$, we have
\begin{equation}
\sigma(H_m)+\sigma(H_{N-m})\subset\sigma(H_N)\:.
\end{equation}
\label{prop1}
\end{prop}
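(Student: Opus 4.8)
The plan is to prove the inclusion $\sigma(H_m) + \sigma(H_{N-m}) \subset \sigma(H_N)$ by a Weyl-sequence argument, exploiting the fact that all operators are bounded and that the pair interactions decay at infinity. Fix $\lambda \in \sigma(H_m)$ and $\mu \in \sigma(H_{N-m})$; we want to show $\lambda + \mu \in \sigma(H_N)$. Since $H_m$ and $H_{N-m}$ are bounded self-adjoint operators, there exist Weyl sequences: normalized $(\varphi_k)_k \subset \ell^2(\Z^m)$ with $\|(H_m - \lambda)\varphi_k\| \to 0$, and normalized $(\psi_k)_k \subset \ell^2(\Z^{N-m})$ with $\|(H_{N-m} - \mu)\psi_k\| \to 0$.

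The key idea is to build a trial sequence for $H_N$ by taking tensor products of translated copies of $\varphi_k$ and $\psi_k$, with the two clusters pushed far apart. First I would note that we may assume each $\varphi_k$ and $\psi_k$ is finitely supported (replacing them by truncations, which changes the relevant norms by an arbitrarily small amount, since these are $\ell^2$-functions and the operators are bounded). Then, for a translation vector $a \in \Z$ acting diagonally on all coordinates of the second cluster, consider $\Phi_{k,a} := \varphi_k \otimes (T_a^{\otimes(N-m)}\psi_k)$, reindexed so the first $m$ coordinates of $\Z^N$ carry $\varphi_k$ and the last $N-m$ carry the translated $\psi_k$. Because $h$ is translation invariant, $\sum_{i=1}^N h_i$ splits as (the $H_m$-kinetic part acting on the first block) plus (the $H_{N-m}$-kinetic part acting on the translated second block), and translation invariance ensures the translated block still sees its own kinetic operator correctly. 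The pair interactions $Q_{ij}$ split into three groups: those with both indices in the first block (giving the interaction part of $H_m$), those with both in the second block (giving that of $H_{N-m}$, translation-invariant since $Q_{ij}$ depends only on differences), and the \emph{cross} terms $Q_{ij}$ with $i$ in the first block and $j$ in the second. For the cross terms, $|z_i - z_j|$ is at least $|a|$ minus the diameters of the supports of $\varphi_k,\psi_k$, so as $a \to \infty$ (with $k$ fixed) and using $Q(r)\to 0$, the operator norm of the cross-interaction applied to $\Phi_{k,a}$ tends to $0$.

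Hence for each fixed $k$, choosing $|a| = a(k)$ large enough, we get $\|(H_N - (\lambda+\mu))\Phi_{k,a(k)}\| \le \|(H_m-\lambda)\varphi_k\| + \|(H_{N-m}-\mu)\psi_k\| + \varepsilon_k$ where $\varepsilon_k \to 0$ captures the cross-term contribution, and $\|\Phi_{k,a(k)}\| = 1$. Thus $(\Phi_{k,a(k)})_k$ is a Weyl sequence for $H_N$ at energy $\lambda + \mu$, proving $\lambda+\mu \in \sigma(H_N)$. Since $\lambda,\mu$ were arbitrary, the claimed inclusion follows. The main obstacle — though a mild one in this bounded discrete setting — is the bookkeeping needed to verify carefully that the cross-interaction terms really do go to zero: one must control them uniformly over the (compactly supported) trial states, which is where the finite-support reduction and the decay hypothesis $Q(r)\to 0$ are both used. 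Everything else is routine tensor-product algebra plus the standard characterization of spectra of bounded self-adjoint operators via approximate eigenvectors.
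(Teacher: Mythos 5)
Your proposal is correct and follows essentially the same route as the paper: finitely supported Weyl sequences for the two cluster Hamiltonians, a tensor product with the second factor translated diagonally so the supports are far apart, the decomposition $H_N = H_m\otimes\idty + \idty\otimes H_{N-m} + \sum_{i\le m,\,j\ge m+1}Q_{ij}$, and the observation that the cross-interaction terms are small by the decay of $Q$. No substantive differences.
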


\begin{proof}
Let $\lambda \in \sigma(H_m)$ and $\mu \in \sigma(H_{N-m})$. As $H_m$ and $H_{N-m}$ are bounded, there are finitely supported and normalized Weyl sequences $\psi_n \in \ell^2(\Z^m)$ and $\varphi_n \in \ell^2(\Z^{N-m})$ to $\lambda$ and $\mu$, respectively, i.e.\
\begin{equation}
\|(H_m-\lambda)\psi_n\| \to 0, \quad \|(H_{N-m}-\mu)\varphi_n\| \to 0 \quad \mbox{as $n\to\infty$},
\end{equation}
see, e.g., Theorem~7.22 in \cite{Weidmann}. Thus there are finite integers $\sigma_n^{min/max}$ and $\tau_n^{min/max}$ such that
\begin{equation}
\supp \psi_n \subset [\sigma_n^{min}, \sigma_n^{max}]^m, \quad \supp \varphi_n \subset [ \tau_n^{min}, \tau_n^{max}]^{N-m}\;.
\end{equation}

To $\varepsilon>0$ let $R\in \N$ be such that 
\begin{equation} \label{eq:Wsmall}
|Q(r)| < \frac{\varepsilon}{3m(N-m)} \quad \mbox{for all $r\ge R$}\;.
\end{equation}
For all $n\in \N$ choose $c_n \in \Z$ such that
\begin{equation} \label{eq:distsupp}
\dist \left( [ \sigma_n^{min}, \sigma_n^{max}], [\tau_n^{min} + c_n, \tau_n^{max} + c_n] \right) > R\;.
\end{equation}
Write $z=(x,y) \in \Z^N$ with $x\in \Z^m$, $y\in \Z^{N-m}$ and let
\begin{equation}
\chi_n(z) := \psi_n(x) (T_{(c_n,\ldots,c_n)} \varphi_n)(y)\;
\end{equation}
where $T_{(c_n,\ldots,c_n)}$ denotes a multi-dimensional translation along the diagonal in $\Z^{N-m}$. Note that the $\chi_n$ are normalized in $\ell^2(\Z^N)$. 

We may write
\begin{equation}
H_N = H_m \otimes \idty + \idty \otimes H_{N-m} + \sum_{i\le m,\; j\ge m+1} Q_{ij}\;,
\end{equation}
which, together with $[T_{(c_n,\ldots,c_n)}, H_{N-m}]=0$, yields
\begin{equation}
\|(H_N - (\lambda + \mu)) \chi_n \|  \le  \|(H_m - \lambda) \psi_n\| + \|(H_{N-m} - \mu) \varphi_n\| + \sum_{i\le m, \,j\ge m+1} \| Q_{ij} \chi_n\|\;. 
\end{equation}
By (\ref{eq:Wsmall}) and (\ref{eq:distsupp}) each term in the sum on the right hand side is less than $\varepsilon/(3m(N-m))$, and thus the sum less than $\varepsilon/3$, uniformly in $n$. For $n$ sufficiently large the other two terms on the right hand side will also be less than $\varepsilon/3$ and thus $\|(H_N - (\lambda + \mu)) \chi_n\| < \varepsilon$, proving that $\lambda + \mu \in \sigma(H_N)$.
\end{proof}

\subsection{Extension to symmetric and antisymmetric subspaces}

In Proposition~\ref{prop:Schrodinger} we have shown that, when considered on the $N$-particle subspace, the infinite XXZ spin chain is equivalent to the restriction of a discrete $N$-body Schr\"odinger operator to the \emph{antisymmetric} subspace $\ell_a^2(\Z^N)$of $\ell^2(\Z^N)$. In order to apply the result of Proposition~\ref{prop1} to our specific problem, we therefore have to extend its result to antisymmetric subspaces. This will also work for symmetric subspaces.

Let $\ell^2_a(\Z^N)$ be the antisymmetric subspace of $\ell^2(\Z^N)$ as introduced in Section~\ref{sec:Schrodinger} above, and also consider the symmetric subspace $\ell^2_s(\Z^N)$ given by all $f\in \ell^2(\Z^N)$ such that 
\begin{equation}
f(x_{\sigma(1)}, \ldots, x_{\sigma(N)}) = f(x_1,\ldots,x_N)
\end{equation}
for all $\sigma \in S_N$. The symmetric and antisymmetric subspaces are invariant under $H_N$. Let $H_N^{(s)}$ and $H_N^{(a)}$ denote the restrictions of $H_N$ to $\ell^2_s(\Z^N)$ and $\ell^2_a(\Z^N)$, respectively. 

\begin{prop} \label{prop:HVZsub}
For all $m \in \{1,\ldots,N-1\}$ we have
\begin{equation}
\sigma(H_m^{(s)}) + \sigma(H_{N-m}^{(s)}) \subset \sigma(H_N^{(s)})
\end{equation}
and
\begin{equation}
\sigma(H_m^{(a)}) + \sigma(H_{N-m}^{(a)}) \subset \sigma(H_N^{(a)})\;.
\end{equation}
\label{prop:symm}
\end{prop}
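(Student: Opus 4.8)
\textbf{Proof strategy for Proposition~\ref{prop:HVZsub}.}

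The plan is to mimic the proof of Proposition~\ref{prop:HVZ}, but to symmetrize (resp.\ antisymmetrize) the trial functions $\chi_n$ so that they lie in the appropriate subspace. Fix $\lambda \in \sigma(H_m^{(\#)})$ and $\mu \in \sigma(H_{N-m}^{(\#)})$, where $\#$ is either $s$ or $a$. As before, choose finitely supported normalized Weyl sequences $\psi_n \in \ell^2_{\#}(\Z^m)$ and $\varphi_n \in \ell^2_{\#}(\Z^{N-m})$ for $\lambda$ and $\mu$ with respect to the restricted operators; such sequences exist since $H_m^{(\#)}$, $H_{N-m}^{(\#)}$ are bounded self-adjoint. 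Then form $\chi_n(z) = \psi_n(x)(T_{(c_n,\dots,c_n)}\varphi_n)(y)$ exactly as in the proof of Proposition~\ref{prop:HVZ}, with $c_n$ chosen via (\ref{eq:distsupp}) so that the supports of the two factors are separated by more than $R$. Finally set
\begin{equation*}
\widetilde{\chi}_n := \frac{1}{N!}\sum_{\sigma \in S_N} \varepsilon(\sigma)\, U_\sigma \chi_n,
\end{equation*}
where $U_\sigma$ is the permutation action on $\ell^2(\Z^N)$ and $\varepsilon(\sigma) = 1$ in the symmetric case, $\varepsilon(\sigma) = \sgn(\sigma)$ in the antisymmetric case. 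By construction $\widetilde{\chi}_n \in \ell^2_{\#}(\Z^N)$, and since $H_N$ commutes with every $U_\sigma$ we have $\|(H_N - (\lambda+\mu))\widetilde{\chi}_n\| \le \|(H_N - (\lambda+\mu))\chi_n\| \to 0$ by the estimate already established in the proof of Proposition~\ref{prop:HVZ}. The only remaining point is to check that $\widetilde{\chi}_n$ does not collapse to zero, i.e.\ that $\|\widetilde{\chi}_n\|$ stays bounded below.

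\textbf{The main obstacle: non-degeneracy of the (anti)symmetrized trial function.} The key observation is that the $N!$ translates $U_\sigma \chi_n$, $\sigma \in S_N$, need \emph{not} all be orthogonal, but the ones that fail to be orthogonal can be controlled. Split $S_N = S_m \times S_{N-m} \,\sqcup\, (\text{rest})$, where $S_m \times S_{N-m}$ denotes the subgroup permuting the first $m$ coordinates among themselves and the last $N-m$ among themselves. For $\sigma \in S_m \times S_{N-m}$, one has $U_\sigma \chi_n(z) = (\psi_n \text{ permuted})(x)\cdot(T_{(c_n,\dots,c_n)}\varphi_n \text{ permuted})(y) = \varepsilon(\sigma)\chi_n(z)$ because $\psi_n$ and $\varphi_n$ are themselves (anti)symmetric and the diagonal translation $T_{(c_n,\dots,c_n)}$ commutes with coordinate permutations. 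Hence the sum over this subgroup contributes $\frac{m!(N-m)!}{N!}\chi_n$, a nonzero multiple of $\chi_n$. For any $\sigma \notin S_m\times S_{N-m}$, the permuted function $U_\sigma \chi_n$ has some of its first-$m$-block variables ranging over $[\tau_n^{\min}+c_n,\tau_n^{\max}+c_n]$ and some of its last-block variables ranging over $[\sigma_n^{\min},\sigma_n^{\max}]$; since these two $x$-intervals are disjoint (by (\ref{eq:distsupp}), as $R\ge 0$), the support of $U_\sigma\chi_n$ is disjoint from that of $\chi_n$, so $\langle \chi_n, U_\sigma\chi_n\rangle = 0$. Taking the inner product of $\widetilde{\chi}_n$ with $\chi_n$ therefore yields $\langle \chi_n, \widetilde{\chi}_n\rangle = \frac{m!(N-m)!}{N!}\|\chi_n\|^2 = \frac{m!(N-m)!}{N!} > 0$, which forces $\|\widetilde{\chi}_n\| \ge \frac{m!(N-m)!}{N!} > 0$ uniformly in $n$.

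\textbf{Conclusion.} Normalizing, the sequence $\widetilde{\chi}_n / \|\widetilde{\chi}_n\|$ is a normalized Weyl sequence for $H_N^{(\#)}$ at the point $\lambda + \mu$, so $\lambda + \mu \in \sigma(H_N^{(\#)})$. Since $\lambda \in \sigma(H_m^{(\#)})$ and $\mu \in \sigma(H_{N-m}^{(\#)})$ were arbitrary, this proves $\sigma(H_m^{(\#)}) + \sigma(H_{N-m}^{(\#)}) \subset \sigma(H_N^{(\#)})$ in both the symmetric and antisymmetric cases. I expect the support-disjointness bookkeeping for the non-block permutations to be the only slightly delicate point; everything else is a direct transcription of the proof of Proposition~\ref{prop:HVZ}.
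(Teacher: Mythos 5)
Your proposal is correct and follows essentially the same route as the paper: the decomposition of $P_*^{(N)}$ into the block subgroup $S_m\times S_{N-m}$ plus the mixing permutations, the disjoint-support argument killing the mixing terms, and the resulting uniform lower bound $\|\widetilde{\chi}_n\|\ge \frac{m!(N-m)!}{N!}$ are exactly the paper's argument (your Cauchy--Schwarz step via $\langle\chi_n,\widetilde{\chi}_n\rangle$ is equivalent to the paper's restriction of the norm sum to the product of the supports). The only detail you gloss over is the existence of \emph{finitely supported} Weyl sequences inside $\ell^2_{\#}$; the paper obtains these by truncating to cubes, which are permutation invariant and hence preserve the symmetry class.
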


\begin{proof}
For $i\in \{m, N-m, N\}$ and $\sigma \in S_i$ consider the unitary permutation operators $U_{\sigma}$ on $\ell^2(\Z^i)$ defined by
\begin{equation}
(U_{\sigma} \psi)(x_1,\ldots,x_i) = \psi(x_{\sigma(1)},\ldots, x_{\sigma(i)}).
\end{equation}
Then the orthogonal projections $P_s^{(i)}$ onto $\ell^2_s(\Z^i)$ and $P_a^{(i)}$ onto $\ell^2_a(\Z^i)$ in $\ell^2(\Z^i)$ are given by
\begin{equation}
P_s^{(i)} = \frac{1}{i!} \sum_{\sigma \in S_i} U_{\sigma}, \quad P_a^{(i)} = \frac{1}{i!} \sum_{\sigma \in S_i} \mbox{\rm sgn}(\sigma) U_{\sigma}\;.
\end{equation}
With this one can formally check that $[H_N, P_*^{(N)}]=0$, $* \in \{s, a\}$, i.e.\ the invariance of the symmetric and antisymmetric subspaces under $H_N$ claimed above (which also could be considered as physically evident).

Observe that
\begin{align}
P_s^{(N)} & = \frac{m!(N-m)!}{N!} P_s^{(m)} \otimes P_s^{(N-m)} +\frac{1}{N!} \sum_{\sigma\in S_N'} U_\sigma\label{eq:vollsymm}\\
P_a^{(N)} & = \frac{m!(N-m)!}{N!} P_a^{(m)} \otimes P_a^{(N-m)} +\frac{1}{N!} \sum_{\sigma\in S_N'} \mbox{\rm sgn}(\sigma) U_\sigma\:,\label{eq:vollanti}
\end{align}
where $S_N' \subset S_N$ are the permutations which do not decompose into separate permutations of $\{1,\ldots,m\}$ and $\{m+1,\ldots,N\}$, i.e.
\begin{equation}
\sigma\in S_N'\quad\text{if}\quad\sigma(\{1,2,\dots,m\})\cap\{m+1,\dots,N\}\neq\emptyset\:.
\end{equation}
To see that this is true, consider permutations $\sigma\in S_N\setminus S_N'$, which can be decomposed as $U_\sigma=U_\mu\otimes U_\nu$, where $\mu\in S_m$ and $\nu\in S_{N-m}$. Thus
\begin{equation}
\sum_{\sigma\in S_N\setminus S_N'} \mbox{\rm sgn}(\sigma) U_\sigma =\sum_{\mu\in S_m}\sum_{\nu\in S_{N-m}} \mbox{\rm sgn}(\mu) \,\mbox{\rm sgn}(\nu) \,U_\mu\otimes U_\nu= m!(N-m)!\, P_a^{(m)}\otimes P_a^{(N-m)}\:,
\end{equation}
which gives (\ref{eq:vollanti}). The same works for (\ref{eq:vollsymm}).

A bit of care is needed in constructing finitely supported Weyl sequences. To $\lambda \in \sigma(H_m^{(*)})$, let $\tilde{\psi}_n \in \ell^2_*(\Z^m)$ be a normalized Weyl sequence, i.e.\ $(H_m^{(*)}-\lambda) \tilde{\psi}_n \to 0$. Due to the boundedness of $H_m^{(*)}$ we can find sufficiently large cubes $\Lambda_n \subset \Z^m$ such that $(H_m^{(*)}-\lambda) \idty_{\Lambda_n} \tilde{\psi}_n \to 0$. Using that the $\Lambda_n$ are cubes we can see that $\psi_n := \idty_{\Lambda_n} \tilde{\psi}_n / \|\idty_{\Lambda_n} \tilde{\psi}_n\|$ is a normalized finitely supported Weyl sequence in $\ell^2_*(\Z^m)$ to $\lambda$ for $H_m^{(*)}$. Similarly we construct a normalized finitely supported Weyl sequence $\varphi_n$ to $\mu$ for $H_{N-m}^{(*)}$.

Similar to the proof of Proposition~\ref{prop1}, we define
\begin{equation} \label{eq:symmproduct}
\chi_n = P_*^{(N)} \left( \psi_n \otimes (T_{(c_n,\ldots,c_n)} \varphi_n) \right)
\end{equation}  
with $c_n$ sufficiently large to guarantee a wide separation of the supports of $\psi_n$ and $T_{(c_n,\ldots,c_n)} \varphi_n$. But note that the tensor product of the symmetric (or antisymmetric) states does not have the full symmetry in $N$ variables, so that in (\ref{eq:symmproduct}) we need to project into the symmetry subspace.

As a result $\chi_n$ is not normalized. However, if we can show that
\begin{equation} \label{eq:weyllowerbound}
\| \chi_n \| \ge c > 0
\end{equation}
uniformly in $n$, then the arguments in the proof of Propostion~\ref{prop1} would go through as before (as the $\chi_n$ could be renormalized). Thus it remains to show (\ref{eq:weyllowerbound}).

To simplify notation let us set
\begin{equation*}
K_n :=\text{supp}\:\psi_n \quad \text{and}\quad L_n :=\text{supp}\:(T_{(c_n,\dots,c_n)} \varphi_n)\:.
\end{equation*}
Write $z=(x,y) \in \Z^N$ with $x\in \Z^m$ and $y\in \Z^{N-m}$ and consider the symmetric case (the same argument works for the antisymmetric case). Using (\ref{eq:vollsymm}) we find
\begin{eqnarray}
\|\chi_N\|^2 & = & \sum_{z \in\Z^N}|\chi_n(z)|^2\geq\sum_{x\in K_n,\;y\in L_n}|\chi_n(x,y)|^2 \\ 
& = & \sum_{K_n, \,L_n} |P_s^{(N)} \left( \psi_n \otimes (T_{(c_n,\ldots,c_n)} \varphi_n) \right)(x,y)|^2 \nonumber \\ 
& = & \sum_{K_n,\,L_n}\bigg|\bigg[\bigg(\frac{m!(N-m)!}{N!} P_s^{(m)} \otimes P_s^{(N-m)} + \frac{1}{N!} \sum_{\sigma\in S_N'} U_\sigma\bigg) \nonumber \\ 
& & \qquad \qquad \big(\psi_n \otimes (T_{(c_n,\ldots,c_n)} \varphi_n)\big)\bigg](x,y)\bigg|^2\:. \nonumber
\end{eqnarray}  

It is clear that 
\begin{equation} 
(P_s^{(m)} \otimes P_s^{(N-m)}) (\psi_n \otimes (T_{(c_n,\ldots,c_n)} \varphi_n)) = \psi_n \otimes (T_{(c_n,\ldots,c_n)} \varphi_n).
\end{equation}
Moreover, we only sum over the supports of the Weyl sequences, which are disjoint by application of the shift operator. Thus it is easy to see that for any $\sigma\in S_N'$,
\begin{equation}
U_\sigma (\psi_n \otimes (T_{(c_n,\ldots,c_n)} \varphi_n))(x,y) =0\quad \mbox{for all $x \in K_n, \;y \in L_n$}\:.
\end{equation}

So, we conclude that
\begin{equation}
\|\chi_n\|\geq\frac{m!(N-m)!}{N!}>0
\end{equation}
independent of $n$.
\end{proof}

By repeated application this extends to decomposition into arbitrarily many clusters:

\begin{cor}
Let $H_N$ be defined as above. Then for any partition of $N$, i.e.\ positive integers $(n_1,n_2,\dots,n_k)$ such that $\sum_{i=1}^k n_i=N$ we have
\begin{equation}
\sigma(H_{n_1}^{(*)}) + \sigma(H_{n_2}^{(*)}) + \ldots + \sigma(H_{n_k}^{(*)}) \subset \sigma(H_N^{(*)})\;,
\end{equation}
where $* \in \{s,a\}$.
\label{corocluster}
\end{cor}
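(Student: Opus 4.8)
The plan is to prove the statement by induction on the number $k$ of clusters, using the two-cluster inclusion of Proposition~\ref{prop:HVZsub} (= Proposition~\ref{prop:symm}) as the workhorse for the inductive step and the elementary monotonicity of the Minkowski sum of subsets of $\R$. Fix $* \in \{s,a\}$ once and for all.

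For the base case $k=1$ the asserted inclusion is simply $\sigma(H_N^{(*)}) \subset \sigma(H_N^{(*)})$, which is trivial; the case $k=2$ is precisely Proposition~\ref{prop:HVZsub}. For the inductive step I would assume the statement for all partitions into at most $k-1$ clusters and take a partition $(n_1,\ldots,n_k)$ of $N$ with $k \ge 2$. Setting $m := n_1 + \cdots + n_{k-1}$, so that $N = m + n_k$ and (since each $n_i \ge 1$ and $k\ge 2$) $1 \le m \le N-1$, Proposition~\ref{prop:HVZsub} applies with this $m$ and gives
\begin{equation}
\sigma(H_m^{(*)}) + \sigma(H_{n_k}^{(*)}) \subset \sigma(H_N^{(*)}),
\end{equation}
while the induction hypothesis, applied to the partition $(n_1,\ldots,n_{k-1})$ of $m$, gives
\begin{equation}
\sigma(H_{n_1}^{(*)}) + \cdots + \sigma(H_{n_{k-1}}^{(*)}) \subset \sigma(H_m^{(*)}).
\end{equation}
Adding the set $\sigma(H_{n_k}^{(*)})$ to both sides of the second inclusion — for subsets of $\R$ one clearly has $A \subset B \Rightarrow A + C \subset B + C$ — and chaining with the first inclusion yields
\begin{equation}
\sigma(H_{n_1}^{(*)}) + \cdots + \sigma(H_{n_k}^{(*)}) \subset \sigma(H_m^{(*)}) + \sigma(H_{n_k}^{(*)}) \subset \sigma(H_N^{(*)}),
\end{equation}
which is exactly the claim for $k$ clusters and completes the induction.

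I do not expect any genuine obstacle here: all the analysis has already been done in the proofs of Propositions~\ref{prop1} and~\ref{prop:symm} (construction of widely separated, finitely supported Weyl sequences, projection into the symmetry subspace, and the uniform lower bound $\|\chi_n\| \ge m!(N-m)!/N!$). The only things worth watching are organizational: that the splitting index $m$ stays in the range $\{1,\ldots,N-1\}$ required by Proposition~\ref{prop:HVZsub} — immediate from $n_i \ge 1$ and $k\ge 2$ — and the trivial monotonicity of the Minkowski sum used in the last display. One could equally well peel off the first cluster rather than the last, or build up the clusters one at a time; the bookkeeping is identical.
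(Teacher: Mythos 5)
Your proposal is correct and is exactly the argument the paper intends: the corollary is stated as following ``by repeated application'' of Proposition~\ref{prop:symm}, which is precisely your induction on the number of clusters combined with monotonicity of the Minkowski sum. You have simply written out the bookkeeping that the paper leaves implicit.
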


\section{The structure of the spectrum} \label{sec:gap}

In this section we return to the study of the structure of the spectra of the operators 
\begin{equation} \label{eq:HZNOperators}
H_{\Z}^N=-\frac{1}{2\Delta}h_0^{(\mathcal{X}^N)}+W+N\cdot\idty \quad \mbox{on $\ell^2(\mathcal{X}^N)$}\,.
\end{equation}
This was initiated in Section~\ref{sec:droplet} by identifying the droplet spectrum $\delta_N \subset \sigma(H_{\Z}^N)$, which in the sense of scattering theory corresponds to the channel in which all $N$ particles form a single cluster and can be exactly calculated as (\ref{eq:dropletband}) using the Bethe ansatz. We can now combine this with the general results of the previous section to find additional subsets of the spectrum corresponding to scattering channels for other cluster decompositions of the particles. Here the knowledge of the exact form of the droplet bands (\ref{eq:dropletband}), in particular the monotonicity and convexity properties of its endpoints as functions of $N$, will allow for a very explicit and simple characterization of the arising {\it cluster bands}, see Section~\ref{subsec:cluster}.

We will not attempt to prove for general $N$ that the union of cluster bands gives the entire spectrum of $H_{\Z}^N$. In particular, from the existence of gaps between the cluster bands we can not conclude that $\sigma(H_{\Z}^N)$ has gaps. Nevertheless, In Section~\ref{subsec:gaps} we will use other methods to prove the existence of gaps in the spectra of $H_{\Z}^N$ as well as in the spectrum of $H_{\Z}$, requiring that $\Delta$ is sufficiently large.

\subsection{Cluster bands} \label{subsec:cluster}

In Section~\ref{subsec:HVZ} choose $h = -\frac{1}{2\Delta} h_{0,\Z}$ from (\ref{eq:discLap}) for the single particle operator and the next-neighbor interaction potential $Q$ from (\ref{eq:potential}). Then, as established in Section~\ref{sec:Schrodinger}, $H_{\Z}^N$ is unitarily equivalent to the restriction of the $N$-body Hamiltonian from (\ref{eq:NpartHam}) to the antisymmetric subspace $\ell^2_a(\Z^N)$. Thus Proposition~\ref{prop:symm} and Corollary~\ref{corocluster} apply. In particular, using that $\delta_{n_i} \subset \sigma(H_{\Z}^{n_i})$ by (\ref{eq:dropletband}), we conclude that 
\begin{equation}
\delta_{(n_1,\ldots,n_k)} := \delta_{n_1}+\delta_{n_2}+\dots+\delta_{n_k}\subset\sigma( H_{\Z}^N)
\end{equation}
for every partition $(n_1, n_2, \ldots, n_k)$ of $N$.

Each $\delta_{(n_1,\ldots,n_k)}$ is an interval. As the number of possible partitions increases rapidly in $N$, this might lead to the idea that the {\it total cluster spectrum}
\begin{equation} \label{eq:barbecue}
\mathcal{C}(N) := \bigcup_{k=1}^N \bigcup_{\begin{array}{cc} (n_1,\ldots,n_k) \in \N^k \\ n_1+\ldots+n_k =N \end{array}} \delta_{(n_1,\ldots,n_k)} \subset \sigma(H_{\Z}^N)
\end{equation}
might become very complicated for large $N$. However, we will show next that $\mathcal{C}(N)$ is a union of $N$ bands, each corresponding to partitions with a given number $k \in \{1,\ldots,N\}$ of components. To this end, let
\begin{equation}
\mathfrak{P}_{k}(N):=\left\{(n_1,n_2,\dots,n_k)\in\N^k:\sum_{i=1}^k n_i=N\right\}
\end{equation}
be the set of all partitions of $N$ with exactly $k$ components. Define the  {\it $k$-cluster band} $\mathcal{C}_k(N)$ as
\begin{equation}
\mathcal{C}_{k}(N):=\bigcup_{(n_1,\dots,n_k)\in\mathfrak{P}_k(N)}\delta_{(n_1,n_2,\dots,n_k)}\:.
\end{equation}

The following proposition will justify the name  ``cluster {\em{band}}\," and also provide a simple formula for $\mathcal{C}_k(N)$. In particular, this verifies that the total cluster spectrum $\mathcal{C}(N) = \bigcup_{k=1}^N \mathcal{C}_k(N)$ is a union of at most $N$ bands.

\begin{prop} \label{prop:clusterband}
Let $\mathcal{C}_{k}(N)$ be the $k$-cluster band for $N$ particles. Then $\mathcal{C}_{k}(N)$ is an interval given by
\begin{equation}
\mathcal{C}_k(N)=\underbrace{\delta_1+\delta_1+\dots+\delta_1}_{k-1\:\text{times}}+\delta_{N-k+1}=\left[(k-1)(1-\frac{1}{\Delta}), (k-1)(1+\frac{1}{\Delta})\right]+\delta_{N-k+1}\:.
\label{eq:clusterband}
\end{equation}
\end{prop}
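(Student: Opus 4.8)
The plan is to reduce the computation of $\mathcal{C}_k(N)$ to understanding how sums of droplet intervals behave as we vary the partition. Since each $\delta_{n_i}$ is a closed interval and the sum of closed intervals is again a closed interval, every $\delta_{(n_1,\ldots,n_k)}$ is a closed interval, with left endpoint $\sum_i \ell(n_i)$ and right endpoint $\sum_i r(n_i)$, where I write $\ell(n) = \tanh(\rho)\frac{\cosh(n\rho)-1}{\sinh(n\rho)}$ and $r(n) = \tanh(\rho)\frac{\cosh(n\rho)+1}{\sinh(n\rho)}$ for the endpoints of $\delta_n$ from (\ref{eq:dropletband}), and $\ell(1)=1-\tfrac1\Delta$, $r(1)=1+\tfrac1\Delta$ from (\ref{eq:delta1}). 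A finite union of closed intervals that all share a common point is again a closed interval; so to show $\mathcal{C}_k(N)$ is an interval it suffices to note that all the $\delta_{(n_1,\ldots,n_k)}$ with $\sum n_i = N$ contain a common point — and indeed the specific partition $(1,1,\ldots,1,N-k+1)$ appearing in (\ref{eq:clusterband}) will turn out to have the smallest left endpoint and the largest right endpoint among all $k$-component partitions, so that its interval contains all the others. Then $\mathcal{C}_k(N) = \delta_{(1,\ldots,1,N-k+1)}$, which is exactly the right-hand side of (\ref{eq:clusterband}) once we add the $k-1$ copies of $\delta_1 = [1-\tfrac1\Delta, 1+\tfrac1\Delta]$.

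So the crux is the two monotonicity/convexity claims: among all $(n_1,\ldots,n_k) \in \mathfrak{P}_k(N)$,
\begin{equation}
\sum_{i=1}^k \ell(n_i) \ \text{is minimized, and} \ \sum_{i=1}^k r(n_i) \ \text{is maximized, at} \ (1,1,\ldots,1,N-k+1). \nonumber
\end{equation}
First I would record the elementary facts about the single-cluster endpoints. Setting $g_{\pm}(n) := \tanh(\rho)\frac{\cosh(n\rho)\pm 1}{\sinh(n\rho)}$, a direct computation (or the identities used right after (\ref{eq:dropletband}), e.g.\ $\cosh(n\rho)-1 = 2\sinh^2(n\rho/2)$ and $\sinh(n\rho) = 2\sinh(n\rho/2)\cosh(n\rho/2)$) gives $g_-(n) = \tanh(\rho)\tanh(n\rho/2)$ and $g_+(n) = \tanh(\rho)\coth(n\rho/2)$. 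From these closed forms the needed behavior is transparent: $g_-(n)$ is strictly increasing in $n$ and bounded, and it is \emph{concave} in $n$ on $n\ge 1$ (as $\tanh$ is concave on the positive axis), while $g_+(n)$ is strictly decreasing in $n$ and \emph{convex} in $n$. I would verify concavity/convexity by checking the sign of the second difference $g(n+1) - 2g(n) + g(n-1)$, or by differentiating the smooth extension; either is a short computation.

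Given these, the optimization is a standard "extreme points of a polytope / smoothing" argument. For the minimum of $\sum \ell(n_i)$: since $g_- = \ell$ restricted to integers is increasing and concave, if some partition has two components $n_i, n_j$ both $\ge 2$, replacing them by $(n_i + n_j - 1, 1)$ keeps the sum of components fixed and, by concavity together with $\ell(1) < \ell(2) \le \ell(n)$ for $n\ge 2$ — more precisely by the inequality $\ell(a) + \ell(b) \ge \ell(a+b-1) + \ell(1)$ valid for $a,b\ge 1$, which is exactly concavity of $\ell$ on an interval of length one shifted appropriately — strictly decreases $\sum \ell(n_i)$ unless one of them was already $1$. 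Iterating drives every partition to $(1,\ldots,1,N-k+1)$, which is therefore the unique minimizer. The maximum of $\sum r(n_i)$ is handled identically using convexity of $r = g_+$: the inequality $r(a) + r(b) \le r(a+b-1) + r(1)$ (again just convexity) shows the same partition maximizes. The main obstacle, and the only place real care is needed, is getting these convexity/concavity statements for the \emph{discrete} endpoint functions cleanly — so I would make sure to phrase everything in terms of the superadditivity-type inequalities $\ell(a)+\ell(b)\ge \ell(a+b-1)+\ell(1)$ and $r(a)+r(b)\le r(a+b-1)+r(1)$, prove those two once and for all from the closed forms $\tanh(\rho)\tanh(n\rho/2)$ and $\tanh(\rho)\coth(n\rho/2)$, and then the combinatorial reduction is immediate. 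Finally I would assemble: $\mathcal{C}_k(N) = \bigcup_{\mathfrak{P}_k(N)}\delta_{(n_1,\ldots,n_k)} = \delta_{(1,\ldots,1,N-k+1)} = (k-1)\delta_1 + \delta_{N-k+1}$, and substitute $\delta_1 = [1-\tfrac1\Delta,1+\tfrac1\Delta]$ to get (\ref{eq:clusterband}).
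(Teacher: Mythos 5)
Your proposal is correct and follows essentially the same route as the paper: both rest on the concavity of the lower droplet-band endpoint and convexity of the upper one in $N$, combined with an iterative reduction of an arbitrary partition to $(1,\ldots,1,N-k+1)$ (your pairwise smoothing inequality $\ell(a)+\ell(b)\ge\ell(a+b-1)+\ell(1)$ is exactly the paper's two-cluster step, and your iteration is its induction on $k$). The only cosmetic differences are that you obtain concavity/convexity from the closed forms $\tanh(\rho)\tanh(n\rho/2)$ and $\tanh(\rho)\coth(n\rho/2)$ rather than from second derivatives, and you deduce that the union is an interval from containment in the extremal interval rather than from the common point $k\sqrt{1-1/\Delta^2}$; both variants are fine.
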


\begin{proof}
First, we show that $\mathcal{C}_k(N)$ is an interval. To this end, observe that the lower/upper boundary of $\delta_N$ as given by (\ref{eq:dropletband}) is increasing/decreasing in $N$,
\begin{equation}
\frac{\text{d}}{\text{d}N}\left(\tanh(\rho)\cdot\frac{\cosh(N\rho)\mp1}{\sinh(N\rho)}\right)=\begin{cases}&\frac{\rho}{2}\frac{\tanh(\rho)}{\cosh^2(N\rho/2)}>0\\-&\frac{\rho}{2}\frac{\tanh(\rho)}{\sinh^2(N\rho/2)}<0\end{cases}\:,
\end{equation}
which shows that all $\delta_N$ have the point
\begin{equation}
\delta_\infty:=\bigcap_{N=1}^\infty\delta_N=\left\{\sqrt{1-\frac{1}{\Delta^2}}\right\}
\end{equation}
in common. Thus, for any $\delta_{(n_1,\dots,n_k)}$ we know that
\begin{equation}
k\cdot\sqrt{1-\frac{1}{\Delta^2}}\in\delta_{(n_1,\dots,n_k)}\:.
\end{equation}
But a union of intervals, which all have at least one point in common has to be an interval itself. 

It remains to prove (\ref{eq:clusterband}), which will be done by explicitly determining the minimum and maximum of $\mathcal{C}_k(N)$. The cases $k=1$ and $k=N$ are trivial.

At first, we argue that for any $\ell \in\{1,\dots,N-1\}$ it holds that
\begin{equation}
\min\left(\delta_1+\delta_{N-1}\right)\leq\min\left(\delta_{\ell}+\delta_{N-\ell}\right)\quad\text{and}\quad\max\left(\delta_1+\delta_{N-1}\right)\geq\max\left(\delta_{\ell}+\delta_{N-\ell}\right)\:,
\label{eq:2cluster}
\end{equation}
which corresponds to the case of $k=2$ clusters.
To see this, just observe that the function for the lower boundary of the droplet bands is concave, respectively convex for the upper boundary,
\begin{equation}
\frac{\text{d}^2}{\text{d}N^2}\left(\tanh(\rho)\frac{\cosh(N\rho)\mp1}{\sinh(N\rho)}\right)=\mp\frac{\rho^2\tanh(\rho)\left(1\mp\cosh(N\rho)\right)^2}{\sinh^3(N\rho)}\begin{cases}<0\\>0\end{cases}\:.
\end{equation}
Proceeding inductively, assume that $\eqref{eq:clusterband}$ holds for up to $k$ clusters. Then for any combination of $k+1$ droplet bands with $\sum_{i=1}^{k+1}n_i=N$, we see that
\begin{align}
&\min\delta_{n_1}+\min\delta_{n_2}+\dots+\min\delta_{n_k} +\min\delta_{n_{k+1}}\\ 
\nonumber \geq&\underbrace{\min\delta_1+\min\delta_1+\dots+\min\delta_1}_{k-1\:\text{times}}+\min\delta_{N-n_{k+1}-(k-1)}+\min\delta_{n_{k+1}}\\
\nonumber \geq&\underbrace{\min\delta_1+\min\delta_1+\dots+\min\delta_1}_{k-1\:\text{times}}+\min\delta_1+\min\delta_{N-k}\:,
\end{align}
where (\ref{eq:2cluster}) was used on the last two terms. This is the desired result for the minimum.  The argument for the maximum works analogously. 
\end{proof}

The identities (\ref{eq:dropletband}) and (\ref{eq:clusterband}) give explicit expressions for the droplet bands $\delta_N$ and total cluster spectra $\mathcal{C}(N)$ in terms of $N$ and $\rho$. We find it instructive and useful for later reference to also provide these formulas in terms of the anisotropy parameter $\Delta = \cosh \rho$, at least for particle numbers up to $N=3$. In addition to the formulas (\ref{eq:delta1}) and (\ref{eq:delta2}) for $\delta_1$ and $\delta_2$ we find from hyperbolic identities that
\begin{equation} \label{eq:delta3}
\delta_3 = \left[ 1- \frac{1}{2\Delta^2-\Delta}, 1- \frac{1}{2\Delta^2+\Delta} \right],
\end{equation}
which yields the cluster spectra
\begin{eqnarray}
\mathcal{C}(1) & = & \left[1-\frac{1}{\Delta}, 1+\frac{1}{\Delta} \right], \\
\mathcal{C}(2) & = & \left[ 1-\frac{1}{\Delta^2}, 1 \right] \cup \left[ 2-\frac{2}{\Delta}, 2+\frac{2}{\Delta} \right], \label{eq:C2} \\
\mathcal{C}(3) & = & \left[ 1- \frac{1}{2\Delta^2-\Delta}, 1- \frac{1}{2\Delta^2+\Delta} \right] \cup \left[ 2-\frac{2}{\Delta} - \frac{1}{\Delta^2}, 2 + \frac{2}{\Delta} \right] \\ & & \mbox{} \nonumber \cup \left[ 3 - \frac{3}{\Delta}, 3 + \frac{3}{\Delta} \right].
\end{eqnarray}
Note, in particular, that for $N=2$ and $N=3$ the droplet band is not separated from the rest of the cluster spectrum, unless $\Delta$ is sufficiently large.

We can also use the above results to identify a large subset of the spectrum of the Hamiltonian $H_{\Z}$ of the infinite XXZ chain, which by (\ref{eq:infvolXXZ}) is the direct sum of the $H_{\Z}^N$.

\begin{cor} \label{cor:XXZspec}
For every $\Delta>1$ it holds that
\begin{equation} \label{eq:shrimpandgrits}
\bigcup_{N=0}^{\infty} \mathcal{C}(N) = \bigcup_{k=0}^{\infty} \left[ k(1-\frac{1}{\Delta}), k(1+\frac{1}{\Delta}) \right] \subset \sigma(H_{\Z}),
\end{equation}
where we have set $\mathcal{C}(0):=\{0\}$. 
\end{cor}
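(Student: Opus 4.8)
The plan is to prove two separate assertions: first, the set identity between the two displayed unions; second, that this common set is contained in $\sigma(H_{\Z})$. The second assertion is essentially free. By (\ref{eq:infvolXXZ}), $H_{\Z}=\bigoplus_{N\ge 0}H_{\Z}^N$, so $\sigma(H_{\Z}^N)\subseteq\sigma(H_{\Z})$ for every $N$; combined with $\mathcal{C}(N)\subseteq\sigma(H_{\Z}^N)$ from (\ref{eq:barbecue}) this gives $\bigcup_{N}\mathcal{C}(N)\subseteq\sigma(H_{\Z})$, while $\mathcal{C}(0)=\{0\}=\min\sigma(H_{\Z})$ (e.g.\ by Corollary~\ref{cor:specmin} with $N=0$, or simply because $H_{\Z}\ge 0$ annihilates the vacuum). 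So the real content is the set equality, which I would establish by the two inclusions.

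For ``$\supseteq$'', fix $k\ge 1$ and use the partition of $k$ into $k$ ones, which is the unique element of $\mathfrak{P}_k(k)$. Hence $\mathcal{C}_k(k)=\delta_{(1,\dots,1)}=\underbrace{\delta_1+\dots+\delta_1}_{k\ \text{times}}$, and by (\ref{eq:delta1}) this Minkowski sum equals $[\,k(1-\tfrac1\Delta),\,k(1+\tfrac1\Delta)\,]$ (this is also the $k=N$ instance of Proposition~\ref{prop:clusterband}). Since $\mathcal{C}_k(k)\subseteq\mathcal{C}(k)\subseteq\bigcup_N\mathcal{C}(N)$, the band $[\,k(1-\tfrac1\Delta),\,k(1+\tfrac1\Delta)\,]$ lies in the left-hand side; the case $k=0$ is covered by the convention $\mathcal{C}(0)=\{0\}$.

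For ``$\subseteq$'', it suffices to show $\mathcal{C}_k(N)\subseteq[\,k(1-\tfrac1\Delta),\,k(1+\tfrac1\Delta)\,]$ for every $N$ and every $k\in\{1,\dots,N\}$, since $\mathcal{C}(N)=\bigcup_{k=1}^N\mathcal{C}_k(N)$ and $\mathcal{C}(0)=\{0\}$ trivially lies in the right-hand side. By Proposition~\ref{prop:clusterband}, $\mathcal{C}_k(N)=\underbrace{\delta_1+\dots+\delta_1}_{k-1\ \text{times}}+\delta_{N-k+1}$ with $N-k+1\ge 1$. The monotonicity of the endpoints of $\delta_M$ in $M$ recorded in the proof of Proposition~\ref{prop:clusterband} (lower endpoint increasing, upper endpoint decreasing) shows the droplet bands are nested, $\delta_M\subseteq\delta_1=[\,1-\tfrac1\Delta,\,1+\tfrac1\Delta\,]$ for all $M\ge 1$; adding $k-1$ copies of $\delta_1$ to both sides of $\delta_{N-k+1}\subseteq\delta_1$ yields $\mathcal{C}_k(N)\subseteq\underbrace{\delta_1+\dots+\delta_1}_{k\ \text{times}}=[\,k(1-\tfrac1\Delta),\,k(1+\tfrac1\Delta)\,]$, as needed.

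There is no real obstacle here beyond bookkeeping: the only points that have to align are that the extreme cluster band $\mathcal{C}_k(k)$ is exactly the $k$-fold sum of $\delta_1$ (realized by the all-singletons partition), and that every other $\mathcal{C}_k(N)$ is squeezed inside it by the nesting $\delta_M\subseteq\delta_1$. Everything rests on Proposition~\ref{prop:clusterband}, the explicit form (\ref{eq:delta1}) of $\delta_1$, and the direct-sum structure (\ref{eq:infvolXXZ}) of $H_{\Z}$.
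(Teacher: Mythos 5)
Your proposal is correct and follows essentially the same route as the paper: the lower bound comes from the all-singletons partition giving $\mathcal{C}_k(k)=\delta_1+\dots+\delta_1$, and the upper bound from the nesting $\delta_M\subseteq\delta_1$ of the droplet bands (the paper applies this nesting directly to each $\delta_{(n_1,\dots,n_k)}$ rather than via the closed formula of Proposition~\ref{prop:clusterband}, but this is an immaterial difference). The containment in $\sigma(H_{\Z})$ via the direct-sum structure is also handled exactly as in the paper.
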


Note that we are not claiming that $\mathcal{C}(N)$ coincides with $\left[ N(1-\frac{1}{\Delta}), N(1+\frac{1}{\Delta}) \right]$ and that this is indeed not true.

\begin{proof}
As $\mathcal{C}(N) \subset \sigma(H_{\Z}^N)$ we have by (\ref{eq:infvolXXZ}) that $\bigcup_{N\ge 0} \mathcal{C}(N) \subset \sigma(H_{\Z})$, where $N=0$ corresponds to the vacuum energy $E=0$. It remains to show equality of the two unions in (\ref{eq:shrimpandgrits}), where we only need to consider $N\ge 1$ and $k\ge 1$. One of the necessary inclusions follows from
\begin{equation}
\left[ k(1-\frac{1}{\Delta}), k (1+\frac{1}{\Delta}) \right] = \underbrace{\delta_1+\dots+\delta_1}_{k\:\text{times}} = \mathcal{C}_k(k) \subset \mathcal{C}(k).
\end{equation}
On the other hand, as $\delta_k \subset \delta_1$ for all $k$,
\begin{eqnarray}
\mathcal{C}(N) = \bigcup_{k=1}^N\mathcal{C}_k(N) & = & \delta_N\cup(\delta_1+\delta_{N-1})\cup \dots \cup(\underbrace{\delta_1+\dots+\delta_1}_{N\:\text{times}}) \\
& \subset & \delta_1 \cup (\delta_1+\delta_1) \cup \dots\cup(\underbrace{\delta_1+\dots+\delta_1}_{N\:\text{times}}) \nonumber \\
& = & \bigcup_{k=1}^N \left[ k(1-\frac{1}{\Delta}), k (1+\frac{1}{\Delta}) \right], \nonumber
\end{eqnarray}
which yields the second inclusion.
\end{proof}

It remains an open problem if equality holds in (\ref{eq:barbecue}), i.e.\ $\mathcal{C}(N) = \sigma(H_{\Z}^N)$ for all $N\in \N$, and thus (\ref{eq:shrimpandgrits}) gives the entire spectrum of the infinite XXZ chain. The only cases where we can verify this are the trivial case $N=1$ and the slightly less trivial case $N=2$. In the latter one can use tools from the theory of Jacobi matrices to determine the exact spectrum of the fiber operators $\widehat{H}^2(\vartheta)$ from (\ref{eq:BlochN2}), namely
\begin{equation}
\sigma(\widehat{H}^2(\vartheta)) = \left\{ 1- \frac{1+\cos \vartheta}{2\Delta^2} \right\} \cup \left[ 2 \left( 1- \sqrt{\frac{1+\cos \vartheta}{2\Delta^2}} \right), 2 \left( 1+ \sqrt{\frac{1+\cos \vartheta}{2\Delta^2}} \right) \right] \end{equation}
for all $\vartheta \in [-\pi, \pi)$, whose union $\sigma(H_{\Z}^2)$ is indeed $\mathcal{C}(2)$ as given by (\ref{eq:C2}).

The validity of $\mathcal{C}(N) = \sigma(H_{\Z}^N)$ for all $N$ should certainly be a consequence of completeness of the Bethe ansatz for the XXZ chain. The latter is understood to mean much more, namely an explicit diagonalization of the XXZ chain Hamiltonian in terms of a Plancherel formula, and therefore the identification of all generalized eigenfunctions and thus also the spectrum of $H_{\Z}$ and its $N$-particle restrictions. For work on this, which we haven't been able to completely verify, we refer to the papers mentioned in the introduction.

\subsection{Separation of energy levels} \label{subsec:gaps}

In this section we will prove the existence of spectral gaps for $H_{\Z}^N$ and $H_{\Z}$. Such results would follow from completeness of the Bethe ansatz and the resulting formulas $\mathcal{C}(N) = \sigma(H_{\Z}^N)$, $N=1,2,\ldots$, and thus $\bigcup_N \mathcal{C}(N) = \sigma(H_{\Z})$, given our explicit knowledge of the cluster bands. To keep our arguments ``soft'' we will instead provide more direct arguments for the existence of spectral gaps.

We start with a simple perturbation theoretic observation, where the kinetic energy term $-\frac{1}{2\Delta}h_0^{(\mathcal{X}^N)}$ is considered as a perturbation of the potential $W+N\cdot\idty$ in (\ref{eq:HZNOperators}). The latter only takes the values $1,\ldots,N$ and thus
\begin{equation} 
\sigma(W+N\cdot \idty) = \{1,\ldots,N\}.
\end{equation}

On the other hand it is easy to see that $\|h_0^{(\mathcal{X}^N)}\| \le 2N$, for example by using the representation (\ref{eq:graphlaplace}) of the unitarily equivalent operator $h_0^{(\Z\times \N^m, \Gamma)}$ and that $\|T_j\|= \|T_j^*\|=1$. In fact, using a Weyl sequence argument it is not much harder to see that $\|h_0^{(\mathcal{X}^N)}\| = 2N$. 

Thus we can apply the general fact that for two self-adjoint operators $A$, $B$ it holds that $\sigma(A+B) \subset \sigma(A) + [-\|B\|, \|B\|]$ to conclude that
\begin{equation} \label{eq:simplegap}
\sigma(H_{\Z}^N) \subset \bigcup_{k=1}^{N} \left[k-\frac{N}{\Delta},k+\frac{N}{\Delta}\right]\:.
\end{equation}

Let us compare this with the lower bound $\mathcal{C}(N) = \bigcup_{k=1}^N \mathcal{C}_k(N) \subset \sigma(H_{\Z}^N)$ which we found above. It is indeed easy to see that 
\begin{equation} 
\mathcal{C}_k(N) \subset \left[k-\frac{N} {\Delta},k+\frac{N}{\Delta}\right] 
\end{equation}
for all $k$ and $N$ (use (\ref{eq:clusterband}) and $\delta_{N-k+1}\subset\delta_1$), which is an equality for the highest energy band $\mathcal{C}_N(N)$. 

Thus we find that for any fixed $N$ and $\Delta >2N$ the spectrum $\sigma(H_{\Z}^N)$ decomposes into $N$ non-trivial parts $\sigma_k$, $1\le k \le N$, separated by gaps such that
\begin{equation}
\mathcal{C}_k(N) \subset \sigma_k \subset \left[ k- \frac{N}{\Delta}, k + \frac{N}{\Delta} \right].
\end{equation}
We see that $\sigma_N = [N-N/\Delta, N+N/\Delta]$, but generally can not identify $\sigma_N$ (or even argue that it is an 
interval).

Also, (\ref{eq:simplegap}) does not establish the existence of gaps in $\sigma(H_{\Z})$, independent of how large $\Delta$ is chosen, as the union of the right hand sides over $N$ covers all of $\R$. All we can say about gaps of $H_{\Z}$ so far is that for every $\Delta>1$ is has a gap of size $1-1/\Delta$ above its ground state energy $E_0=0$: By (\ref{eq:specminN}) we have $\min \sigma(H_{\Z}^N) = \min \delta_N$, which is increasing in $N$ as shown in the proof of Proposition~\ref{prop:clusterband}. Thus
\begin{equation}
\min \left( \sigma(H_{\Z}) \setminus \{0\} \right) = \min \delta_1 = 1- \frac{1}{\Delta}.
\end{equation}

Our final goal of this section is to show that $H_{\Z}$ has an additional gap above the union of the droplet spectra if $\Delta$ is sufficiently large. In the proof of this we will have to go beyond the simple norm estimates which led to (\ref{eq:simplegap}).

\begin{prop} \label{prop:existencegap}
For $\Delta>3$ the spectrum $\sigma(H_\Z)$ has a gap of width $1-\frac{3}{\Delta}$ above the first band, i.e.\  
\begin{equation}
\sigma(H_{\Z}) \cap \left(-\infty, 2-\frac{2}{\Delta} \right) = \{0\} \cup \left[ 1- \frac{1}{\Delta}, 1 + \frac{1}{\Delta} \right].
\end{equation}
\end{prop}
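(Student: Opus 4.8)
The plan is to use the direct sum decomposition $H_{\Z}=\bigoplus_{N\ge 0}H_{\Z}^N$ and argue sector by sector. The inclusion $\supseteq$ is already furnished by Corollary~\ref{cor:XXZspec} via the bands $\mathcal{C}(0)=\{0\}$ and $\mathcal{C}(1)=\delta_1=[1-\tfrac1\Delta,1+\tfrac1\Delta]$. For the reverse inclusion, Corollary~\ref{cor:specmin} together with the monotonicity of $N\mapsto\min\delta_N$ (proof of Proposition~\ref{prop:clusterband}) gives $\sigma(H_{\Z})\cap(-\infty,1-\tfrac1\Delta)=\{0\}$ and $\min\sigma(H_{\Z}^N)=\min\delta_N\ge 1-\tfrac1{\Delta^2}$ for $N\ge2$. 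So the only real task is to show that no $H_{\Z}^N$ has spectrum in the interval $I:=\bigl(1+\tfrac1\Delta,\,2-\tfrac2\Delta\bigr)$, which is nonempty exactly because $\Delta>3$. This is vacuous for $N=0$, and for $N=1$ it follows from $\sigma(H_{\Z}^1)=\delta_1$. Fix $N\ge2$ and $E\in I$; I will show $H_{\Z}^N-E$ is boundedly invertible.

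Write $H_{\Z}^N=A+C$ on $\ell^2(\mathcal{X}^N)$ with $A=-\tfrac1{2\Delta}h_0^{(\mathcal{X}^N)}$ and $C:=W+N\cdot\idty$ the diagonal operator of multiplication by $c(v)$, the number of maximal clusters of a configuration $v=(x_1<\dots<x_N)$; so $c(v)\in\{1,\dots,N\}$, $c(v)=1$ precisely on the \emph{connected} configurations $(a,a+1,\dots,a+N-1)$, and $c\ge2$ elsewhere. Let $P$ be the orthogonal projection onto the span of the connected configurations and $P^\perp=\idty-P$. The decisive combinatorial fact is that for $N\ge2$ every $\mathcal{X}^N$-neighbour of a connected configuration has exactly two clusters (moving either outer particle of the droplet outward breaks off a singleton; no interior particle can move). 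Hence $h_0^{(\mathcal{X}^N)}$ maps $\operatorname{ran}P$ into $\operatorname{ran}P^\perp$, so $PAP=0$; since also $PCP=P$, the block form with respect to $\ell^2(\mathcal{X}^N)=\operatorname{ran}P\oplus\operatorname{ran}P^\perp$ reads
\[
H_{\Z}^N=\begin{pmatrix}\idty & B^{*}\\ B & D\end{pmatrix},\qquad B:=P^\perp A P,\quad D:=P^\perp H_{\Z}^N P^\perp .
\]

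The heart of the argument is the $N$-independent bound $D\ge\bigl(2-\tfrac2\Delta\bigr)\idty$ on $\operatorname{ran}P^\perp$. Although $\|h_0^{(\mathcal{X}^N)}\|=2N$, the operator $h_0^{(\mathcal{X}^N)}$ is dominated \emph{pointwise} by $2C$: expanding $h_0^{(\mathcal{X}^N)}$ over edges and using $|\phi_v\rangle\langle\phi_w|+|\phi_w\rangle\langle\phi_v|\le|\phi_v\rangle\langle\phi_v|+|\phi_w\rangle\langle\phi_w|$ gives $h_0^{(\mathcal{X}^N)}\le\diag(\deg v)$, and a direct count shows $\deg v=2c(v)$ for every $v$ (the admissible rightward moves are $x_j\mapsto x_j+1$ with $x_{j+1}\ge x_j+2$, or $j=N$, and there are exactly $c(v)$ of these, and likewise to the left). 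Thus $h_0^{(\mathcal{X}^N)}\le 2C$, and for $\psi\in\operatorname{ran}P^\perp$,
\[
\langle\psi,D\psi\rangle=-\tfrac1{2\Delta}\langle\psi,h_0^{(\mathcal{X}^N)}\psi\rangle+\langle\psi,C\psi\rangle\ge\bigl(1-\tfrac1\Delta\bigr)\langle\psi,C\psi\rangle\ge 2\bigl(1-\tfrac1\Delta\bigr)\|\psi\|^2 .
\]
Hence $\sigma(D)\subset[2-\tfrac2\Delta,\infty)$, so for $E\in I$ the operator $D-E$ is invertible on $\operatorname{ran}P^\perp$ with $(D-E)^{-1}\ge0$. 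By the Schur/Feshbach criterion, $H_{\Z}^N-E$ is invertible iff the Schur complement of $H_{\Z}^N-E$ relative to the invertible block $D-E$, namely $(1-E)\idty-B^{*}(D-E)^{-1}B$ on $\operatorname{ran}P$, is invertible; and $(1-E)\idty-B^{*}(D-E)^{-1}B\le(1-E)\idty<0$ because $E>1$ and $B^{*}(D-E)^{-1}B\ge0$, so it is boundedly invertible. This gives $E\notin\sigma(H_{\Z}^N)$, and assembling the sectors as in the first paragraph yields the claim. (One may also replace the Feshbach step by a Weyl-sequence argument: the $\operatorname{ran}P$-component of $(H_{\Z}^N-E)\psi_n\to0$ forces $P\psi_n\approx\tfrac1{E-1}B^{*}P^\perp\psi_n$, and inserting this into $\langle\psi_n,(H_{\Z}^N-E)\psi_n\rangle\to0$ leaves a sum of two nonnegative terms, one of which, $\langle P^\perp\psi_n,(D-E)P^\perp\psi_n\rangle\ge(2-\tfrac2\Delta-E)\|P^\perp\psi_n\|^2$, stays bounded away from $0$.)

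The step I expect to be the main obstacle is exactly the $N$-uniform lower bound $\min\sigma\bigl(P^\perp H_{\Z}^N P^\perp\bigr)\ge 2-\tfrac2\Delta$: the naive norm estimate $\|h_0^{(\mathcal{X}^N)}\|=2N$ from Section~\ref{subsec:gaps} is useless here, and what rescues the bound is the observation that $\deg v=2c(v)$, i.e. the configurations carrying large kinetic energy are precisely those on which the confining potential $W+N=C$ is correspondingly large (and never below $2$ off the droplet sector). Everything else is bookkeeping and the standard Schur-complement manipulation.
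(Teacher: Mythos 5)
Your proof is correct, but it takes a genuinely different route from the paper's. The paper passes to the Bloch fibers $\widehat{H}^N(\vartheta)$ on $\ell^2(\N^{N-1})$ and adds the rank-one perturbation $A=\langle e_{(1,\dots,1)},\cdot\rangle e_{(1,\dots,1)}$, proving $\widehat{H}^N(\vartheta)+A\ge 2-\tfrac2\Delta$ by a quadratic-form estimate on the hopping terms together with the identity $\langle T_j^*u,T_j^*u\rangle=\langle u,P_ju\rangle$; rank-one perturbation theory then leaves at most one eigenvalue below the threshold in each fiber, which Proposition~\ref{prop:eigval} identifies as $E_N(\vartheta)\in\delta_N$. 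You instead stay on $\ell^2(\mathcal{X}^N)$, split off the (infinite-dimensional) subspace of droplet configurations, and run a Schur/Feshbach argument; your threshold bound $D\ge 2-\tfrac2\Delta$ rests on the identity $\deg v=2c(v)$ combined with the standard form bound of the adjacency operator by the degree operator. These are really the same mechanism in two pictures --- the paper's $\sum_j\langle u,P_ju\rangle$ counts the gaps $n_j\ge2$, i.e.\ $c(v)-1$, so its diagonal lower bound is likewise $(1-\tfrac1\Delta)c(v)$ off the droplet set --- but your version buys several things: it needs no Fourier transform and no identification of the low-lying fiber eigenvalue (so the Bethe-ansatz input is only used, as you note, for the trivial containments and for $\min\sigma(H_\Z^N)$, and even that could be replaced by your own bound $H_\Z^N\ge(1-\tfrac1\Delta)C\ge(1-\tfrac1\Delta)\idty$); and the combinatorial facts $PAP=0$ and $\deg v=2c(v)$ are arguably more transparent than the corresponding shift-operator manipulations. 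What the paper's fiberwise argument buys in exchange is finer structural information: each $\widehat{H}^N(\vartheta)$ has \emph{exactly} one simple eigenvalue below $2-\tfrac2\Delta$, namely $E_N(\vartheta)$, so the spectrum below the threshold is precisely $\delta_N$, whereas your argument only excludes spectrum in $(1,2-\tfrac2\Delta)$ and must quote the location of $\delta_N$ separately. Both yield the proposition; your Schur-complement step and the verification that the complement $(1-E)\idty-B^*(D-E)^{-1}B\le(1-E)\idty<0$ is boundedly invertible are sound.
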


\begin{proof}
We know that $\sigma(H_{\Z}) = \overline{\cup_N \sigma(H_{\Z}^N)}$, $\sigma(H_{\Z}^0) = \{0\}$, $\sigma(H_{\Z}^1) = [1-1/\Delta, 1+1/\Delta]$ and $\delta_N \subset [1-1/\Delta, 1+1/\Delta]$ for all $N$. This reduces the proof of the theorem to showing that
\begin{equation} \label{eq:toshow1}
\sigma(H_{\Z}^N) \setminus \delta_N \subset \left[ 2 - \frac{2}{\Delta}, \infty \right) \quad \mbox{for all $N\ge 2$}.
\end{equation}
As $H_{\Z}^N$ is unitarily equivalent to the direct integral of $\widehat{H}^N(\vartheta)$ over $\vartheta \in [-\pi,\pi)$ and $\delta_N$ is the union of the $E_N(\vartheta)$ given by (\ref{eq:dropleteigval}), the proof of (\ref{eq:toshow1}) reduces to
\begin{equation} \label{eq:toshow2}
\sigma(\widehat{H}^N(\vartheta)) \setminus \{ E_N(\vartheta) \} \subset \left[ 2 - \frac{2}{\Delta}, \infty \right) \quad \mbox{for all $\vartheta \in [-\pi, \pi)$, $N\ge 2$.}
\end{equation}
Recall here that $\widehat{H}^N(\vartheta)$ acts in $\ell^2(\N^m)$, $m=N-1$, via (\ref{eq:fiberoperator}). We will prove (\ref{eq:toshow2}) by a rank-one perturbation argument. For this let $A = \langle e_{(1,1,\ldots,1)}, \cdot \rangle e_{(1,1,\ldots,1)}$ be the orthogonal projection onto the span of the canonical basis vector $e_{(1,1,\ldots,1)}$ in $\ell^2(\N^m)$. We will show that
\begin{equation} \label{eq:toshow3}
\widehat{H}^N(\vartheta) + A \ge 2 - \frac{2}{\Delta}
\end{equation}
for all $N$ and $\vartheta$. That $A$ is rank-one then implies that $\widehat{H}^N(\vartheta)$ has at most one simple eigenvalue below $2-2/\Delta$, which by Proposition~\ref{prop:eigval} is $E_N(\vartheta)$ (the latter lies in $\delta_N \subset [1-1/\Delta,1+1/\Delta]$ and thus indeed below $2-2/\Delta$ due to $\Delta >3$).

The quadratic form of $\widehat{H}_N(\vartheta)+A$ is given by
\begin{align} \label{eq:perturbed}
\langle u,(\widehat{H}_N&(\vartheta)+A)u\rangle = \\ & -\frac{1}{2\Delta}\left(\langle u,(e^{i\vartheta}T_1^*+e^{-i\vartheta}T_1)u\rangle+\sum_{j=1}^{m-1}\langle u,(T_{j+1}^*T_j+T_j^*T_{j+1})u\rangle +\langle u,(T_m^*+T_m)u\rangle\right) \notag \\ & +\sum_{j=1}^m\langle u,P_j u\rangle+\langle u,u\rangle+\langle u,Au\rangle\:. \notag
\end{align}
Observe that 
\begin{align}
\langle u,(e^{i\vartheta}T_1^*+e^{-i\vartheta}T_1)u\rangle&=2 \,\mbox{\rm Re}\,\langle u, e^{i\vartheta}T_1^* u\rangle\leq\langle u,u\rangle+\langle T_1^* u,T_1^*u\rangle\\
\langle u,(T^*_{j+1}T_j+T^*_j T_{j+1})u\rangle&=2 \,\mbox{\rm Re} \,\langle T_{j+1}^* u,T_j^* u\rangle\leq \langle T_{j}^* u,T_j^* u\rangle+\langle T_{j+1}^* u,T_{j+1}^* u\rangle\\
\langle u,(T_m^*+T_m)u\rangle&=2 \,\mbox{\rm Re} \,\langle u, T_m^*u\rangle\leq\langle u,u\rangle+\langle T_m^* u,T_m^* u\rangle\:,
\end{align}
where we have used in the second line that $T_{j+1}^*$ and $T_j$ (as well as $T_{j+1}$ and $T_j^*$) commute. Now, we use these inequalities to estimate the kinetic term in \eqref{eq:perturbed} from below and, after some simplification, get
\begin{equation} \label{eq:estimate}
\langle u,(\widehat{H}_N(\vartheta)+A)u\rangle\geq -\frac{1}{\Delta}\left(\langle u,u\rangle+\sum_{j=1}^m\langle T_j^* u,T_j^* u\rangle\right)+\sum_{j=1}^m\langle u,P_j u\rangle+\langle u,u\rangle+\langle u,Au\rangle\:.
\end{equation}
The operator $T_j^*$ truncates after a left shift in the $j$-th coordinate, so that $\langle T_j^* u,T_j^* u\rangle = \langle u,P_j u\rangle$. Therefore
\begin{equation} \label{eq:gutenacht}
\langle u,(\widehat{H}_N(\vartheta)+A)u\rangle\geq\left(1-\frac{1}{\Delta}\right)\langle u,u\rangle+\left(1-\frac{1}{\Delta}\right)\sum_{j=1}^m\langle u,P_j u\rangle+\langle u,A u\rangle\:.
\end{equation}
The quadratic form on the right hand side is diagonal, i.e.\ of the form $\sum_{n\in \N^m} \alpha_n |u_n|^2$. Thus for the proof of (\ref{eq:toshow3}) it remains to be shown that $\alpha_n \ge 2 - 2/\Delta$ for all $n\in \N^m$.

For $n=(1,1,\ldots,1)$ we get contributions from the first and last terms on the right hand side of (\ref{eq:gutenacht}), showing that $\alpha_{(1,1,\ldots,1)} \ge 2-1/\Delta$. For all other $n$ we have $n_j \ge 2$ for at least one $j \in \{1,\ldots,m\}$. Thus $\langle u, P_j u \rangle \ge |u_n|^2$, so that contributions from the first and second term in (\ref{eq:gutenacht}) yield $\alpha_n \ge 2-2/\Delta$. 
\end{proof} 

We conclude this section with some remarks:

(i) We already know that $\delta_1\cup(\delta_1+\delta_1) = [1-1/\Delta, 1+1/\Delta] \cup [2-2/\Delta, 2+2/\Delta] \subset\sigma(H_\Z)$. For $\Delta \le 3$ this is a single interval without gap. Thus the condition $\Delta>3$ for the existence of a gap above the first band of ${\sigma}(H_\Z)$ is sharp.

(ii) Let us ignore the somewhat trivial case $N=1$, for which the spectrum of $H_{\Z}^1$ consists of the single band $\delta_1 = [1-1/\Delta, 1+1/\Delta]$. For $N\ge 2$ the above argument shows that
\begin{equation}
\sigma(H_{\Z}^N) \cap \left(-\infty, 2-\frac{2}{\Delta} \right) = \{0\} \cup \delta_N \subset \{0\} \cup \left[ 1-\frac{1}{\Delta^2}, 1 \right],
\end{equation}
so that $(1,2-2/\Delta)$ is a spectral gap of $H_{\Z}^N$ uniformly for all $N\ge 2$. For this to be non-trivial we only need $\Delta>2$.

(iii) We believe that for {\it any} $\Delta>1$ and asymptotically in large particle number the $H_{\Z}^N$ will have a uniform spectral gap above the droplet band, more precisely: For all $\Delta>1$ there exist $N_0$ and a non-trivial open interval $I_0$ such that, for all $N\ge N_0$, $I_0$ lies in a spectral gap above the droplet band $\delta_N$ of $H_{\Z}^N$. With the above reasoning, using that $\delta_N \to \{\sqrt{1-1/\Delta^2}\}$ this can be verified for $\Delta>5/3$. For smaller $\Delta$ this would require more refined quadratic form bounds in the above proof. This would be an infinite volume version of the main result in \cite{NS}, where it is shown for the finite XXZ chain with $\Delta>1$ and suitable boundary conditions that the droplet spectrum is uniformly separated from higher excitations in the limit of a large number of down-spins (particles).

\section{Concluding Remarks} \label{sec:remarks}

As mentioned in the Introduction, our motivation for looking at the XXZ spin chain stems from our believe that it provides a promising model for the investigation of many-body localization phenomena. While this will be left to future work, we will use this concluding section to elaborate some more on our reasons for this belief.

Specifically, one might want to consider the infinite XXZ chain $H_{\Z}(\nu)$ in exterior field given by (\ref{eq:infvolXXZmag}), where $\nu = (\nu_x)_{x\in \Z}$ are i.i.d.\ random variables with common distribution $\mu$ such that
\begin{equation} \label{eq:suppmu}
\mbox{supp}\, \mu = [0, \nu_{max}].
\end{equation}
$H_{\Z}(\nu)$ is unitarily equivalent to the direct sum of the operators $H_{\Z}^N(\nu)$ given by (\ref{eq:equivform}). As explained in Section~\ref{sec:Schrodinger}, the latter are Hamiltonians of interacting $N$-particle Anderson models with attractive next-neighbor interaction $Q$ given by (\ref{eq:potential}) and restricted to the fermionic subspace. 

A first observation is that each $H_{\Z}^N(\nu)$ is ergodic with respect to the translation $T$ on $\ell^2({\mathcal X}^N)$ given by $Te_{(x_1,\ldots,x_N)} = e_{(x_1+1,\ldots,x_N+1)}$ and the shift $(S\nu)_x = {\nu}_{x+1}$ of the random parameters, that is
\begin{equation}
T^{-1} H_{\Z}^N(\nu) T = H_{\Z}^N(S\nu).
\end{equation}
Thus $H_{\Z}^N(\nu)$ has deterministic spectrum, i.e.\ there exists $\Sigma^N \subset \R$ such that
\begin{equation}
\sigma(H_{\Z}^N(\nu)) = \Sigma^N \quad \mbox{for a.e.\ $\nu$},
\end{equation}
see Chapter~V.2 of \cite{CarmonaLacroix}.

One can actually say more and, under the assumption (\ref{eq:suppmu}), explicitly describe the almost sure spectrum of $H_{\Z}^N(\nu)$ as
\begin{equation} \label{eq:asspec}
\Sigma^N = \sigma(H_{\Z}^N) + [0,N \nu_{max}].
\end{equation}
A detailed proof of this is given in \cite{F} and follows arguments well known from the study of the Anderson model, e.g.\ Proposition~V.3.5 of \cite{CarmonaLacroix} or Theorem~2 \cite{Stolz}.

As a consequence, the spectrum of the infinite random XXZ chain is almost surely given by
\begin{eqnarray}
\sigma(H_{\Z}(\nu)) & = & \{0\} \cup \bigcup_{N=1}^{\infty} \Sigma^N \\
& = & \{0\} \cup \bigcup_{N=1}^{\infty} \left( \sigma(H_{\Z}^N) + [0,N\nu_{max}] \right) \nonumber \\
& = & \{0\} \cup \left[ 1-\frac{1}{\Delta}, \infty \right), \nonumber
\end{eqnarray}
where the latter uses $\delta_N \subset \sigma(H_{\Z}^N)$, the nesting property of the droplet bands $\delta_N$ and $\min \delta_1 = 1-1/\Delta$. Thus, for any $\nu_{\max}>0$, the randomness smears out all the spectral gaps of $H_{\Z}(\nu)$ other than the ground state gap.

However, for fixed $N\ge 2$, $\nu_{max}>0$ sufficiently small and $\Delta$ sufficiently large, (\ref{eq:asspec}) and the inclusion (\ref{eq:toshow1}) shown in the proof of Proposition~\ref{prop:existencegap} implies that $\sigma(H_{\Z}^N(\nu))$ has an additional spectral gap above the {\it extended droplet band} $\delta_N + [0, N\nu_{max}]$.

Regarding localization properties, a first question to ask is if for fixed $N$ the spectrum of $H_{\Z}^N(\nu)$ is localized near the almost sure spectral minimum, which, by (\ref{eq:asspec}), is given by the minimum of the droplet band. This is a question about localization for the interacting fermionic $N$-body Anderson model, but the localization regime considered here is not covered by known results on the $N$-body Anderson model, e.g.\ \cite{AizenmanWarzel, ChulaevskySuhov}. Nevertheless, we believe that the explicit knowledge we have about the droplet spectrum and its generalized eigenfunctions and the fact that the low lying spectrum of $H_{\Z}^N(\nu)$ can be considered as one-dimensional {\it surface spectrum} (compare with Figure~\ref{fig:1} for the case $N=2$) provide tools which will allow to tackle this problem.

Moreover, one could hope that the generalized eigenfunctions for the droplet bands obtained via the Bethe ansatz provide enough quantitative information to control the $N$-dependence in the localization proof for $H_{\Z}^N(\nu)$. One goal could be to show that the infinite random XXZ chain $H_{\Z}(\nu)$ has a region of pure point spectrum which extends above $\sqrt{1-\Delta^{-2}}$, the limiting point of the droplet bands as $N \to \infty$. This would mean the existence of localized states of arbitrarily large particle number (or droplet size), which can be considered as a form of many-body localization.



\bigskip

\end{document}